\definecolor{red}{rgb}{0.7,0.15,0.15}
\definecolor{green}{rgb}{0,0.5,0}
\definecolor{blue}{rgb}{0,0,0.7}
\makeatletter \@addtoreset{equation}{section}
\newtheorem{theorem}{Theorem}[section]
\newtheorem{lemma}[theorem]{Lemma}
\newtheorem{definition}[theorem]{Definition}
\newtheorem{remark}[theorem]{Remark}
\def \E{\mathbb{E}}
\def \R{\mathbb{R}}
\def\Ac{{\cal A}}
\def\d{\mathrm{d}}
\title{A Mean Field Game between Informed Traders and a Broker}
\author{Philippe \textsc{Bergault}\footnote{Université Paris Dauphine-PSL, Ceremade, 75116, Paris, France; bergault@ceremade.dauphine.fr.} \and Leandro \textsc{Sánchez-Betancourt}\footnote{Mathematical Institute, and Oxford-Man Institute of Quantitative Finance, University of Oxford, OX2 6GG, Oxford, England; leandro.sanchezbetancourt@maths.ox.ac.uk.}}
\begin{document}

\maketitle

\begin{abstract}
We find closed-form solutions to the stochastic game between a broker and a mean-field of informed traders. In the finite player game,  the informed traders observe a common signal and a private signal. The broker, on the other hand, observes the trading speed of each of his clients and  provides liquidity to the informed traders. Each player in the game optimises wealth adjusted by inventory penalties. In the mean field version of the game, using a Gâteaux derivative approach, we characterise the solution to the game with a system of forward-backward stochastic differential equations that we solve explicitly. We find that the optimal trading strategy of the broker is linear on his own inventory,  on the average inventory among informed traders, and on the common signal or the average trading speed of the informed traders. The Nash equilibrium we find helps informed traders decide how to use private information, and helps brokers decide how much of the order flow they should externalise or internalise when facing a large number of clients. \\

\noindent{\bf Keywords: market making, algorithmic trading, externalisation, internalisation, mean field games.} 
\end{abstract}

\section{Introduction}

Liquidity provision plays a key role in financial markets. A large portion of the liquidity provision activity happens in over-the-counter (OTC) markets where broker-client relationships are ubiquitous. Brokers face an important trading problem when deciding how to face the order flow from informed traders. These problems are challenging because one is typically interested in the Nash equilibrium of a stochastic game.\\

The study of externalisation-internalisation strategies is an active area of research. Externalisation refers to the act of hedging or unwinding a position sent by a client. On the other hand, internalisation refers to the warehousing of risk by the broker, on the hope that either prices move favourably to the broker or that other trades arrive in the opposite direction. Focusing on electronic FX spot markets, \citeauthor{butz2019internalisation} \cite{butz2019internalisation} use queuing theory to derive a closed-form expression for the average internalisation horizon and the cost of internalisation. \citeauthor{barzykin2021algorithmic} \cite{barzykin2021algorithmic} propose a market making model for dealers who have access to an inter-dealer market allowing them to externalise part of their risk. In particular, they show that the dealer starts externalising only outside of a certain inventory range (see also \citeauthor{barzykin2022market} \cite{barzykin2022market, barzykin2023dealing}). The recent article of \citeauthor{cartea2023detecting} \cite{cartea2023detecting} uses a proprietary dataset of transactions of an FX broker to develop a framework that predicts toxic trades and uses this information to decide whether to internalise or externalise trades. Additionally, the recent BIS Triennial Survey in \citeauthor{schrimpf2019fx} \cite{schrimpf2019fx}, thoroughly discusses the trade-off between internalisation and externalisation on empirical grounds, highlighting the increasing prevalence of internalisation in FX markets. The paper shows diverse behaviours, ranging from complete externalisation to significant internalisation ratios. It is noteworthy that, despite internalisation ratios surpassing 80\% in the FX markets' top trading centres, hedging through externalisation remains a crucial aspect of risk management. \\

In a closely related branch of the literature, there are a number of works that study the unwinding of stochastic order flow. The work of \citeauthor{cartea2020trading} \cite{cartea2020trading} studies the optimal liquidation strategy of a broker trading in a triplet of currency pairs with stochastic order flow from their clients. 
In \citeauthor{cartea2022double} \cite{cartea2022double} the stochastic order flow to be unwound is that of the proceeds of the sale of a stock that trades in a foreign currency.
Recently, \citeauthor{muhle2023pre} \cite{muhle2023pre} studies how brokers pre-hedge a possible trade from a client to achieve (potentially) better outcomes for both parties.
Lastly, another recent article is that of \citeauthor{nutz2023unwinding} \cite{nutz2023unwinding} where the authors solve a control problem for the optimal externalisation schedule of an exogenous order flow with an Obizhaeva–Wang type price impact and quadratic instantaneous costs. 
Our paper also computes an optimal externalisation-internalisation strategy for the broker although it arises as the Nash equilibrium in a market with a large number of clients. \\

Information asymmetry has been studied extensively in the algorithmic trading literature. For instance, \citeauthor{muhlekarbe2017information} \cite{muhlekarbe2017information} show how short-term informational advantages can be monetised by high frequency traders. The competition between high frequency traders and slow traders with information asymmetry is also the topic of \citeauthor{cont2023fast} \cite{cont2023fast}. Liquidity provision with adverse selection is studied in \citeauthor{herdegen2023liquidity} \cite{herdegen2023liquidity}. Recently, \citeauthor{cartea2022brokers} \cite{cartea2022brokers} introduced a framework where a broker faces a representative informed trader and a representative uninformed trader. Their Stackelberg game admits closed-form solutions for the strategies of the informed trader and that of the broker. In this paper, we build on their framework and we design a problem where the broker faces a large number of informed traders, each of which have access to a common signal and a private signal. We then consider the mean-field-game (MFG) formulation of the problem and find closed-form solutions for the mean-field Nash equilibrium of the game. We show how the broker trades as a function of the average trading speed across informed traders, and how he manages inventory.\\

Our work lies at the intersection of algorithmic trading and mean-field games. Earlier works at this intersection were concerned with the standard optimal execution problem à la \citeauthor{almgren2001optimal} \cite{almgren2001optimal} studied from a mean-field game setting. For example,  in \citeauthor{cardaliaguet2018mean} \cite{cardaliaguet2018mean} the trader faces uncertainty with respect to price changes because of his actions but also has to deal with price changes due to other similar market participants. \citeauthor{huang2019mean} \cite{huang2019mean} extend this work using a major–minor mean-field game framework in which minor agents trade along with the major agent. In \citeauthor{firoozi2016mean} \cite{firoozi2016mean} the authors also consider an optimal execution problem through a linear-quadratic major-minor mean-field game, but the inventory of the major is only partially observed. The case of a large number of traders trying to perform optimal execution has been studied in \citeauthor{casgrain2018mean} \cite{casgrain2018mean, casgrain2020mean}. \citeauthor{neuman2023trading} \cite{neuman2023trading} study a similar problem with jointly aggregated transient price impact and a common price signal (see also \citeauthor{jaber2023equilibrium} \cite{jaber2023equilibrium}). Recently, the authors of \citeauthor{baldacci2023mean} \cite{baldacci2023mean} proposed a mean-field version of standard market making models à la \citeauthor{avellaneda2008high} \cite{avellaneda2008high} in which a market maker faces a large number of strategic market takers and sets his bid and ask quotes accordingly in order to manage inventory risk -- in particular, the broker cannot externalise. Our model departs from all these previous formulations while remaining closely related with respect to the end goal. To the best of our knowledge, this paper is the first to derive the unique closed-form solutions to the Nash equilibrium between a broker and a mean-field of informed traders. \\

This paper delves into the intricate strategy employed by a broker who not only provides liquidity to a large number of informed traders, but also engages in liquidity-taking transactions in a lit market. In our model, both the broker and the informed traders aim to maximise their expected wealth while strategically managing inventory holdings. The broker employs an inventory penalty to safeguard his strategy against inventory risks (especially toxic inventory). Simultaneously, the informed trader uses the inventory penalty to control her exposure to inventory risks stemming from speculative trades based on common and private signals. The problem is modelled as a linear-quadratic major-minor mean field game that we can solve explicitly using a Gâteaux derivative approach. The broker's strategy (i) determines the optimal externalisation of the flow from informed traders and (ii) guides the interactions with the lit market trading for hedging and speculative purposes. The derived closed-form strategy of the broker involves a linear combination of his inventory, the average informed trader’s inventory, and the common signal. For the representative informed trader, it also involves her private signal as well as her own inventory.\\

The remainder of the paper is organised as follows. Section \ref{sec_N_players} introduces a model with $N$ informed traders and the broker. Every agent observes a common signal on the price. On top of that, each informed trader observes a private signal, and the trading activity of the broker on the lit market has a permanent impact on the price. Section \ref{sec_mfg_proba_framework} derives the mean field limit of this game and solves for the Nash equilibrium. In particular, we show that the functionals we optimise are strict concave up to null sets, G\^ateaux differentiable, and we characterise the Nash equilibrium of the game with a system of forward-backward stochastic differential equations (FBSDEs) that we solve in closed form. Section \ref{sec_num} shows numerical results  and Section \ref{sec_concl} concludes.

\section{The game with $N$ informed traders}\label{sec_N_players}

\subsection{Framework}

We consider a trading horizon $T>0$ and a probability space $\big(\Omega,\mathcal F, \mathbb{P}\big)$ under which all the stochastic processes are defined. We set some positive integer $N\in \mathbb{N}^\star$, corresponding to the number of informed traders acting on a market consisting of a single asset whose price process is denoted by $S.$ This quantity can be thought as the mid-price of the asset.\\

We introduce $N+2$ independent standard Brownian motions $W^S, W^\alpha, W^1, \ldots, W^N$ that we employ in the equations below. Under the probability $\mathbb P$, the price process $(S_t)_{t\in [0,T]}$ follows
$$\d S_t = \sigma^S \d W^S_t,$$
where $\sigma^S>0$.\\

Each trader observes a common signal and a private signal. The common or fundamental signal $(\alpha_t)_{t\in [0,T]}$ satisfies the following stochastic differential equation (SDE)
$$\d \alpha_t = -k^\alpha \alpha_t \d t + \sigma^\alpha \d W^\alpha_t,$$
with $k^\alpha, \sigma^\alpha >0$.
On the other hand, the private signal of each trader is not shared, that is, the private signal of one trader is hidden to every other trader. For trader $n \in \llbracket 1,N \rrbracket$, we denote their signal by $(\alpha^n_t)_{t\in[0,T]}$, which follows the SDE
$$\d \alpha^n_t = -k^n \alpha^n_t \d t + \sigma^n \d W^n_t,$$
with $k^n, \sigma^n$ two positive random variables, independent from everything else, with the same joint distribution $\zeta$ for all $n$ and such that
$$\E [k^n] = \bar k \qquad \text{and} \qquad \E [\sigma^n] = \bar \sigma,$$
where $\bar k, \bar \sigma>0$.\\

We denote by $(Q^n_t)_{t\in [0,T]}$ and $(X^n_t)_{t\in [0,T]}$ the inventory process and cash process of the $n-$th trader, respectively. She trades with the broker at rate $(\nu^n_t)_{t\in [0,T]}$ (she buys when $\nu^n_t>0$ and sells when $\nu^n_t<0$). Therefore, her inventory has dynamics
$$\d Q^n_t = \nu^n_t \d t,$$
and her cash process has dynamics
$$\d X^n_t = -\nu^n_t \left(S_t + \eta^I \nu^n_t\right) \d t,$$
where $\eta^I>0$ is a transaction cost charged by the broker to the traders.\\

We denote by $(Q^B_t)_{t\in [0,T]}$ and $(X^B_t)_{t\in [0,T]}$ the inventory and cash process of the broker, respectively. The broker receives the order flow from the traders and trades in a lit market at rate $(N\nu^B_t)_{t\in [0,T]}$.\footnote{ The idea here is that the externalisation activity of the broker should scale with the number of participants. The control $\nu^B_t$ can be thought as the ‘‘execution rate per client’’. Of course, this is just a change of variable that does not change the problem for the $N-$player game.} Therefore, his inventory has dynamics
$$\d Q^B_t = \left( N\nu^B_t - \sum_{n=1}^N \nu^n_t \right) \d t,$$
and his cash process has dynamics
$$\d X^B_t = \sum_{n=1}^N \nu^n_t \left(S_t + \eta^I \nu^n_t\right) \d t - N\nu^B_t \left(S_t + \eta^B \nu^B_t \right)\d t,$$
where $\eta^B>0$ corresponds to the execution costs in the lit market.\\

For each $n\in \llbracket 1,N \rrbracket$, we introduce the (completed) filtration $\left( \mathcal F^n_t\right)_{t\in [0,T]}$ generated by $S, \alpha, \alpha^n, Q^n, \nu^B, k^n,\sigma^n$.
We also introduce the (completed) filtration $\left( \mathcal F^B_t\right)_{t\in [0,T]}$ generated by $S, \alpha, Q^B, \nu^1, \ldots, \nu^N$.


\subsection{The problem of the informed traders}

Let $b>0$. For $n\in \llbracket 1,N \rrbracket$, let us introduce the probability measure $\mathbb P^{n, \nu^B}$ given by
$$\left. \frac{\d \mathbb P^{n, \nu^B}}{ \d \mathbb P} \right|_{\mathcal F^n_t} = \exp \left( \int_0^t  \frac{b\,\nu^B_u + \alpha^n_u + \alpha_u}{\sigma^S}  \d W^S_u - \frac 12 \int_0^t \left(\frac{b\,\nu^B_u + \alpha^n_u + \alpha_u}{\sigma^S}\right)^2 \d u\right).$$

Under this probability measure, $(S_t)_{t\in [0,T]}$ has dynamics 
$$\d S_t = \left(b\,\nu^B_t + \alpha^n_t + \alpha_t \right) \d t + \sigma^S \d \tilde W^{S,n},$$
where $\tilde W^{S,n}$ is a standard Brownian motion under $\mathbb P^{n, \nu^B}$. In other words, each informed trader observes a common signal $\alpha$ on the price, on top of which each one observes an idiosyncratic signal $\alpha^n$ that is hidden to the other traders. Moreover, the broker has a linear permanent impact on the price due to the trading in the lit market.\\

For a given $(\nu^B_t)_{t\in[0,T]}$, the $n-$th informed trader maximises the following objective function
\begin{align*}
\mathbb E^{n, \nu^B} \left[  X^n_T + Q^n_T S_T - a^n\,\left( Q^n_T \right)^2 - \phi^n \int_0^T \left( Q^n_t \right)^2 \d t\right],
\end{align*}
over her set of admissible controls $(\nu^n_t)_{t\in [0,T]}$, where $\mathbb E^{n, \nu^B}$ is the expectation taken under probability $\mathbb P^{n, \nu^B}$, and $a^n,\phi^n$ two positive random variables, independent from everything else, with the same joint distribution $\xi$ for all $n$ and such that
$$\E [a^n] = \bar a \qquad \text{and} \qquad \E [\phi^n] = \bar \phi,$$
where $\bar a, \bar \phi >0$.\\

It is easy to see that this amounts to maximising
\begin{align}\label{sec_N_players:tradersproblem}
\mathbb E^{n, \nu^B} \left[  \int_0^T \left\{ Q^n_t \left(b\,\nu^B_t + \alpha^n_t + \alpha_t \right) - \eta^I \left( \nu^n_t\right)^2 - 2\,a^n Q^n_t \nu^n_t -  \phi^n\left( Q^n_t \right)^2 \right\} \d t\right].
\end{align}

\subsection{The problem of the broker} 

We introduce the probability measure $\mathbb P^{B, \nu^B}$ given by
$$\left. \frac{\d \mathbb P^{B, \nu^B}}{ \d\mathbb P} \right|_{\mathcal F^n_t} = \exp \left( \int_0^t  \frac{b\,\nu^B_u  + \alpha_u}{\sigma^S}  \d W^S_u - \frac 12 \int_0^t \left(\frac{b\,\nu^B_u + \alpha_u}{\sigma^S}\right)^2 \d u\right).$$

Under this probability measure, $(S_t)_{t\in [0,T]}$ has dynamics 
$$\d S_t = \left(b\,\nu^B_t  + \alpha_t \right) \d t + \sigma^S \d \tilde W^{S,B},$$
where $\tilde W^{S,B}$ is a standard Brownian motion under $\mathbb P^{B, \nu^B}$. That is, the broker observes the fundamental signal $\alpha$ on the price, and he has linear permanent impact on the price when he trades in the lit market.\\

For given $(\nu^1_t)_{t\in[0,T]}, \ldots, (\nu^N_t)_{t\in[0,T]}$, the broker wants to maximise the following objective function
\begin{align*}
\mathbb E^{B, \nu^B} \left[  X^B_T + Q^B_T S_T -\frac{a^B}{N}\,\left(Q^B_T\right)^2 - \frac{\phi^B}{N} \int_0^T \left( Q^B_t \right)^2 \d t\right],
\end{align*}
over his set of admissible controls $(\nu^B_t)_{t\in [0,T]}$, where $\mathbb E^{B, \nu^B}$ is the expectation taken under probability $\mathbb P^{B, \nu^B}$, and $a^B,\phi^B>0$ correspond to the risk aversion of the broker. 
It is easy to see that this amounts to maximising
\begin{align*}
\mathbb E^{B, \nu^B} \left[  \int_0^T \left\{ Q^B_t \left(b\,\nu^B_t  + \alpha_t \right) + \eta^I \sum_{n=1}^N \left(\nu^n_t \right)^2- N\eta^B \left( \nu^B_t\right)^2 - 2\frac{a^B}{N}Q^B_t \left( N\nu^B_t - \sum_{n=1}^N \nu^n_t \right) -  \frac{\phi^B}{N}\left( Q^B_t \right)^2 \right\} \d t\right].
\end{align*}

Of course, the optimisation problem remains unchanged if we scale the objective function by dividing it by $N$, in which case the broker maximises
\begin{align}\label{sec_N_players:brokersproblem}
\mathbb E^{B, \nu^B} \left[  \int_0^T \left\{ \bar Q^B_t \left(b\,\nu^B_t  + \alpha_t \right) + \eta^I \frac 1N \sum_{n=1}^N \left(\nu^n_t \right)^2- \eta^B \left( \nu^B_t\right)^2 - 2\,a^B \bar Q^B_t \left( \nu^B_t - \frac 1N \sum_{n=1}^N \nu^n_t \right) -  \phi^B\left( \bar Q^B_t \right)^2 \right\} \d t\right],
\end{align}
where $\left( \bar Q^B_t\right)_{t\in [0,T]} = \left( \frac{Q^B_t}{N}\right)_{t\in [0,T]}$, that is, 
$$\d \bar Q^B_t = \left( \nu^B_t - \frac 1N \sum_{n=1}^N \nu^n_t  \right) \d t.$$

\subsection{Limitations of the finite game model and mean field limit}\label{section_comments_limitations}

In the above, we described an agent-based model where $N$ informed traders trade against a single broker. Solving this multi-agent ($N+1$ players) problem boils down to the resolution of a system of Hamilton--Jacobi--Bellman equations, where the state variables are the inventory processes of the broker and the $N$ informed traders, as well as their idiosyncratic and common signals. This system of $N+1$ HJB equations is intractable in practice for a large number of informed traders.\\

To obtain an approximate solution to this problem, in the remaining of the paper we propose a mean field game approach. The broker does not face $N$ informed traders but infinitely many of them in a mean field interaction, which can be thought as the averaged behaviour of the informed traders. In the next section, we present rigorously the mean field limit of the $N$-players model, and the corresponding optimisation problems of the broker and a representative informed trader.\\ 

Propagation of chaos also tells us that at the limit, we should expect that the execution rate associated with each trader will become independent conditionally to $\alpha$ and $\nu^B$, therefore it is reasonable to assume that the dynamics of $\bar Q^B$ will converge toward
$$\d \bar Q^B_t = \left( \nu^B_t - \bar \nu_t  \right) \d t,$$
where $\bar \nu_t $ denotes the expectation of the execution rate of an informed trader at time $t$ knowing $\alpha_t$ and $\nu^B_t$.\\

In the next section, we rigorously introduce the mean field version of the problem.

\section{Facing many informed traders: the mean field limit}
\label{sec_mfg_proba_framework}
\subsection{Probabilistic framework}

\label{weakformsec}

We consider a trading horizon $T>0$. We denote by $\Omega_c$ the set of continuous functions from $[0,T]$ to $\mathbb R$, and let $\Omega = \R^4 \times \Omega_c^2$. The observable state is the canonical process $\chi = \left( k^I, \sigma^I, a^I, \phi^I, W^\alpha, W^I\right)$ of the space $\Omega$, with
$$k^I(\omega) = \omega^k, \quad \sigma^I(\omega) = \omega^\sigma, \quad a^I(\omega) = \omega^a, \quad \phi^I(\omega) = \omega^\phi, \quad W^\alpha_t(\omega)= \omega^\alpha(t), \quad \text{and} \quad W^I_t (\omega) = \omega^I(t)$$

for all $t\in[0,T]$ and $\omega = (\omega^k, \omega^\sigma, \omega^a, \omega^\phi, \omega^\alpha, \omega^I) \in \Omega$. We introduce positive constants $k^\alpha, \sigma^\alpha$, as well as the unique probability $\mathbb P$ on $\Omega$ under which the process $\left(W^\alpha, W^I\right)$ is a standard bi-dimensional Brownian motion, $(k^I, \sigma^I)$ has distribution $\zeta$, $(a^I, \phi^I)$ has distribution $\xi$, and  $\left(W^\alpha, W^I\right), (k^I, \sigma^I)$, and $(a^I, \phi^I)$ are independent.  \\

We define the processes $(\alpha_t)_{t\in [0,T]}$ and $(\alpha^I_t)_{t\in [0,T]}$ as the unique solution to the SDEs
$$\d \alpha_t = -k^\alpha \alpha_t \d t + \sigma^\alpha \d W^\alpha_t,$$
and
$$\d \alpha^I_t = -k^I \alpha^I_t \d t + \sigma^I \d W^I_t$$
respectively, with $\alpha_0 = \alpha^I_0 = 0$ and where $k^I, \sigma^I$ are two positive random variables with respective expectation $\bar k$ and $\bar \sigma $. \\

Finally, we denote the canonical $\mathbb P-$completed filtration generated by $\chi$ by $\mathbb F = \left( \mathcal F_t \right)_{t\in [0,T]}$. 

\subsection{Admissible controls}

The set of admissible strategies for the representative informed trader as well as the broker is given by
$$\mathcal A = \left\{ \nu = (\nu_t)_{t\in [0,T]} \left| \nu \text{ is } \mathbb F-\text{progressively measurable, and } \mathbb E \left[ \int_0^T \nu_t^2 \d t \right] < +\infty \right. \right\}.$$

Let us denote by $(\mu_t)_{t\in [0,T]}$ the process with values in $\mathcal P (\mathbb R)$ representing at time $t$ the distribution of the execution rates of the (other) informed traders conditionally to $\mathcal F_t$. We introduce the mean field execution rate $(\bar \nu_t)_{t\in [0,T]}$ given for each $t\in [0,T]$ by
$$\bar \nu_t = \int_{\mathbb R} x\, \mu_t(\d x).$$

For a couple $(\nu^I, \nu^B)\in \mathcal A^2$ of strategies, we define the associated inventory processes of the representative informed trader and the broker by 
$$Q^I_t = \int_0^t \nu^I_u \d u$$
and
$$\bar Q^B_t = \int_0^t \left( \nu^B_u - \bar \nu_u \right) \d u,$$
respectively.

\subsection{Optimisation problems and definition of equilibria}

Taking the mean field version of  \eqref{sec_N_players:tradersproblem}, we consider that the representative informed trader wants to solve
$$\underset{\nu^I \in \mathcal A}{\sup} H^{I,\nu^B} (\nu^I)$$
where
\begin{align}\label{representative_obj}
H^{I,\nu^B} (\nu^I) = \mathbb E \left[  \int_0^T \left\{ Q^I_t \left(b\,\nu^B_t + \alpha^I_t + \alpha_t \right) - \eta^I \left( \nu^I_t\right)^2 - 2\,a^I Q^I_t \nu^I_t -  \phi^I\left( Q^I_t \right)^2 \right\} \d t\right],
\end{align}
where $b>0$ represents the market impact of the broker, $\eta^I$ represents the transaction costs charged by the broker to the traders, and $a^I, \phi^I>0$ correspond to the risk aversion of the informed trader.\\

Next, taking the mean-field version of problem \eqref{sec_N_players:brokersproblem}, we consider the following problem for the broker
$$\underset{\nu^B \in \mathcal A}{\sup} H^{B,\mu}(\nu^B),$$
where
\begin{align}\label{broker_obj}
H^{B,\mu}(\nu^B) = \mathbb E \left[  \int_0^T \left\{ \bar Q^B_t \left(b\,\nu^B_t  + \alpha_t \right) + \eta^I \int_{\mathbb R} x^2 \mu_t(\d x) - \eta^B \left( \nu^B_t\right)^2 - 2\,a^B \bar Q^B_t \left(\nu^B_t - \int_\mathbb R x\, \mu_t(dx) \right) -  \phi^B\left( \bar Q^B_t \right)^2 \right\} \d t\right],
\end{align}
with $\eta^B>0$  the transaction costs on the lit market, and $a^B, \phi^B>0$ the risk aversion parameters for the broker. We assume that model parameters $a^B$ and $b$ satisfy that $2\,a^B - b \geq 0$, which will be necessary to prove the strict concavity (up to null sets) of the functional of the broker. We also assume that the permanent price impact parameter satisfies that  $b\leq 2\,\eta^B,\,2\,\eta^I,\,4\,\phi^B,\,4\,\bar{\phi}$ which is used to prove existence and uniqueness of a matrix Riccati differential equation below. \\

Finally, we can now define what we mean by a solution to the mean field game.
\begin{definition}\label{geneq}
A solution of the above game is given by a probability flow $\mu^{\star}\in \mathcal P(\mathbb R)$, a control $ \nu^{I,\star}\in \mathcal A$, and a control $\nu^{B,\star}\in \mathcal A$ such that
\begin{itemize}
    \item[$(i)$] $H^{I, \nu^{B,\star}}(\nu^{I,\star}) = \underset{\nu^I \in \mathcal A}{\sup} H^{I, \nu^{B,\star}} (\nu^I);$
    \item[$(ii)$] $H^{B, \mu^{\star}}(\nu^{B,\star}) = \underset{\nu^B \in \mathcal A}{\sup} H^{B, \mu^{\star}} (\nu^B);$
    \item[$(iii)$] $\mu^{\star}_t$ is the distribution of $\nu^{I,\star}_{t}$ conditionally to $\mathcal F^\alpha_t$ for Lebesgue--almost every $t\in[0,T]$,
\end{itemize}
where $\mathbb F^\alpha := \left( \mathcal F^\alpha_t\right)_{t\in [0,T]}$ is the $\mathbb P-$completed filtration generated by $W^\alpha$.
\end{definition}

\subsection{The informed trader's optimality condition}
In this subsection, we  show that (i) the functional $H^{I,\nu^B}$ is strictly concave, (ii) G\^ateaux differentiable, and (iii) we characterise the optimal trading strategy of the representative informed trader. 

\begin{lemma}\label{concaveI}
Let $\nu^B\in\mathcal A$.  The functional $H^{I,\nu^B}(\cdot):\mathcal A\to \R$ defined in \eqref{representative_obj} is strictly concave up to a $\mathbb P \otimes \d t-$null set.
\end{lemma}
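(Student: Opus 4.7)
The plan is to exploit the fact that $H^{I,\nu^B}$ is a polynomial of degree two in $\nu^I$. With $\nu^B$ fixed, the terms $Q^I_t \bigl( b\,\nu^B_t + \alpha^I_t + \alpha_t\bigr)$ are linear in $\nu^I$ (since $Q^I_t = \int_0^t \nu^I_u \, \d u$), whereas the three remaining summands $-\eta^I (\nu^I_t)^2 - 2\,a^I Q^I_t \nu^I_t - \phi^I (Q^I_t)^2$ assemble into a quadratic form $Q(\nu^I)$. For any such decomposition $H^{I,\nu^B} = L + Q$ with $L$ affine and $Q$ quadratic, the polarization identity gives
$$\lambda\, H^{I,\nu^B}(\nu^I_1) + (1-\lambda)\, H^{I,\nu^B}(\nu^I_2) - H^{I,\nu^B}\bigl(\lambda\,\nu^I_1 + (1-\lambda)\,\nu^I_2\bigr) = -\,\lambda\,(1-\lambda)\, Q(\nu^I_1-\nu^I_2),$$
so it suffices to prove that $Q(h) < 0$ for every $h\in \mathcal{A}$ which differs from $0$ on a set of positive $\mathbb{P}\otimes \d t$-measure.

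To evaluate $Q(h)$, set $H^h_t := \int_0^t h_u \, \d u$ and use the chain rule to obtain $\d (H^h_t)^2 = 2\, H^h_t\, h_t\, \d t$. Since $a^I$ is time-independent, this turns the cross-term into a boundary quadratic:
$$\int_0^T 2\,a^I\, H^h_t\, h_t \, \d t = a^I\,(H^h_T)^2.$$
Taking expectations one obtains
$$Q(h) = -\,\eta^I \,\mathbb{E}\!\left[\int_0^T h_t^2 \, \d t\right] - \mathbb{E}\!\left[a^I (H^h_T)^2\right] - \mathbb{E}\!\left[\int_0^T \phi^I (H^h_t)^2 \, \d t\right].$$
Because $\eta^I>0$ and $a^I, \phi^I \geq 0$ almost surely, the last two expectations are non-positive and the first is strictly negative whenever $h$ is not $\mathbb{P}\otimes \d t$-almost everywhere zero. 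Plugging this into the polarization identity yields the desired strict concavity modulo null sets.

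The main (very mild) obstacle is ensuring that each expectation is well-defined. This follows from $h\in\mathcal A$, which guarantees $\mathbb{E}[\int_0^T h_t^2 \, \d t] < \infty$, combined with the non-negativity of $a^I$ and $\phi^I$; these give a definite sign to the potentially unbounded terms and yield the clean domination $Q(h)\leq -\eta^I\, \mathbb{E}[\int_0^T h_t^2\, \d t]$, which is enough for the argument to go through even if the moments of $a^I$ and $\phi^I$ are not finite. No additional machinery beyond a single application of the chain rule is required.
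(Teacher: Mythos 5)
Your proof is correct and follows essentially the same route as the paper's: both reduce strict concavity of the quadratic functional to showing that the homogeneous quadratic part is (strictly) negative definite on non-null controls, and both exploit the identity $\int_0^T (\int_0^t h_u\,\d u)\,h_t\,\d t = \tfrac12\bigl(\int_0^T h_t\,\d t\bigr)^2$ to give the cross term a definite sign. The only cosmetic differences are that you phrase the reduction via the abstract polarization identity while the paper carries out the expansion of $H^{I,\nu^B}\bigl(\rho\zeta+(1-\rho)\nu\bigr)$ explicitly, and you obtain the boundary term via the chain rule applied to $(H^h_t)^2$ rather than the paper's Fubini/symmetrization argument.
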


\begin{proof}
Let $\nu^B\in\mathcal A$ and let $\zeta,\nu \in\mathcal A$. Let $A\in \mathcal A\otimes \mathcal B([0,T])$  with $\mathfrak{m}(A) > 0$ where $\mathfrak{m}:=\mathbb P \otimes \d t$ and  for $(\omega,t)\in A$ we have that $\eta_t(\omega) \neq \nu_t(\omega)$. Let $\rho\in(0,1)$, we need to show that 
\begin{equation*}
    H^{I,\nu^B}(\rho\,\zeta + (1-\rho)\,\nu) > \rho\,H^{I,\nu^B}(\zeta) + (1-\rho)\,H^{I,\nu^B}(\nu)\,.
\end{equation*}
We observe that 
\begin{equation*}
    Q^{I,\rho\,\zeta + (1-\rho)\,\nu}_t = \rho\,Q^{I,\zeta}_t + (1-\rho)\,Q^{I,\nu}_t\,,
\end{equation*}
where we use the controls in the superscript to draw attention to the process used to define $Q^I_t$. Then, it follows that 
\begin{align*}
     H^{I,\nu^B}(\rho\,\zeta + (1-\rho)\,\nu) &= \mathbb E \Bigg[  \int_0^T \bigg\{ Q^{I,\rho\,\zeta + (1-\rho)\,\nu}_t \left(b\,\nu^B_t + \alpha^I_t + \alpha_t \right) - \eta^I \left( \rho\,\zeta_t + (1-\rho)\,\nu_t\right)^2 \\
     &\qquad\qquad - 2\,a^I\,Q^{I,\rho\,\zeta + (1-\rho)\,\nu}_t \,\left(\rho\,\zeta_t + (1-\rho)\,\nu_t\right) -  \phi^I\left( Q^{I,\rho\,\zeta + (1-\rho)\,\nu}_t \right)^2 \bigg\} \d t\Bigg]\\
     &= \rho\,H^{I,\nu^B}(\zeta) +  (1-\rho)\,H^{I,\nu^B}(\nu) \\
     &\qquad 
     + \rho\,(1-\rho)\,\mathbb E \Bigg[  \int_0^T \bigg\{  -2\,\eta^I\,\zeta_t\,\nu_t  - 2\,a^I\left(Q^{I,\zeta}_t\,\nu_t + Q^{I,\nu}_t\,\zeta_t \right) - 2\,\phi^I\,Q^{I,\zeta}_t\,Q^{I,\nu}_t \bigg\} \d t\Bigg]\\
     &\qquad 
     - \rho\,(1-\rho)\,\mathbb E \Bigg[  \int_0^T \bigg\{ -\eta^I\,\zeta^2_t -\eta^I\,\nu^2_t  - 2\,a^I\left(Q^{I,\zeta}_t\,\zeta_t + Q^{I,\nu}_t\,\nu_t\right) \\
     &\hspace{6cm} -\phi^I\left(\left(Q^{I,\zeta}_t\right)^2 + \left(Q^{I,\nu}_t\right)^2\right)  \bigg\} \d t\Bigg]\,.
\end{align*}
It then follows that 
\begin{align*}
 H^{I,\nu^B}(\rho\,\zeta + (1-\rho)\,\nu) &=   \rho\,H^{I,\nu^B}(\zeta) +  (1-\rho)\,H^{I,\nu^B}(\nu) \\
 &\quad + \rho\,(1-\rho)\,\mathbb E \Bigg[  \int_0^T \bigg\{ \eta^I\left(\zeta_t - \nu_t\right)^2 + 2\,a^I\left(Q^{I,\zeta}_t - Q^{I,\nu}_t\right)\left(\zeta_t - \nu_t\right) + \phi^I\left(Q^{I,\zeta}_t + Q^{I,\nu}_t\right)^2 \bigg\} \d t\Bigg]\,.\nonumber
\end{align*}
By the definition of $Q^I$ we have that 
\begin{align*}
\mathbb{E}\left[\int_0^T \left( Q^{I,\zeta}_t - Q^{I,\nu}_t\right) \left( \zeta_t - \nu_t\right) \d t\right] & = \mathbb{E}\left[\int_0^T \left( \int_0^t \zeta_u - \nu_u\d u\right) \left( \zeta_t - \nu_t\right) \d t\right] \\
& = \frac{1}{2}\,\mathbb{E}\left[\int_0^T \int_0^T \left(\zeta_u - \nu_u\right) \left( \zeta_t - \nu_t\right) \d u\, \d t\right] \\
& = \frac{1}{2}\mathbb{E}\left[\left(\int_0^T \zeta_t - \nu_t\, \d t\right)^2\right]\,,
\end{align*}
and clearly
\begin{align*}
& \mathbb E \Bigg[  \int_0^T \left(Q^{I,\zeta}_t - Q^{I,\nu_t}\right)^2 \d t\Bigg] \geq  0 \,.
\end{align*}
Given that $\mathfrak{m}(A) > 0$, the following holds 
\begin{equation*}
    \mathbb E \Bigg[  \int_0^T \left(\zeta_t - \nu_t\right)^2 \d t\Bigg]  >0\,,
\end{equation*}
and this together with the above inequalities implies that $H^{I,\nu^B}(\rho\,\zeta + (1-\rho)\,\nu) > \rho\,H^{I,\nu^B}(\zeta) + (1-\rho)\,H^{I,\nu^B}(\nu)$.
\end{proof}

\begin{lemma}\label{gateauxI}
    The functional $H^{I,\nu^B}$ defined in \eqref{representative_obj} is everywhere Gâteaux differentiable in $\mathcal A$. The Gâteaux derivative at a point $\nu^I \in \mathcal A$ in a direction $w^I \in \mathcal A$ is given by
    \begin{align}\label{Gateaux_trader}
        \big\langle D H^{I,\nu^B} (\nu^I), w^I\big\rangle = \mathbb E \left[ \int_0^T w^I_t \left\{ - 2\,\eta^I \nu^I_t - 2\,a^I Q^I_T + \int_t^T \left( b\,\nu^B_u +\alpha^I_u + \alpha_u  - 2 \phi^I Q^I_u \right) \d u \right\} \d t   \right].
    \end{align}
\end{lemma}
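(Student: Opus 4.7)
The plan is to directly compute the perturbation $H^{I,\nu^B}(\nu^I + \varepsilon w^I) - H^{I,\nu^B}(\nu^I)$, identify its linear part in $\varepsilon$, and then rewrite that linear part in the form announced in \eqref{Gateaux_trader} by applying Fubini's theorem and one integration by parts.

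First, I would introduce the auxiliary inventory $R^I_t := \int_0^t w^I_u\,\d u$, so that the inventory associated with the perturbed control satisfies $Q^{I,\nu^I+\varepsilon w^I}_t = Q^I_t + \varepsilon R^I_t$. Plugging this into \eqref{representative_obj} and expanding the quadratic terms in $\nu^I_t$ and $Q^I_t$ gives an explicit decomposition
\begin{align*}
H^{I,\nu^B}(\nu^I + \varepsilon w^I) - H^{I,\nu^B}(\nu^I) = \varepsilon\, L(\nu^I,w^I) + \varepsilon^2\, R(w^I),
\end{align*}
where
\begin{align*}
L(\nu^I,w^I) = \mathbb E \Bigg[\int_0^T \bigg\{ R^I_t\big(b\,\nu^B_t+\alpha^I_t+\alpha_t\big) - 2\eta^I\nu^I_t w^I_t - 2a^I\big(R^I_t\nu^I_t + Q^I_t w^I_t\big) - 2\phi^I Q^I_t R^I_t\bigg\}\d t\Bigg],
\end{align*}
and $R(w^I)$ collects the $O(\varepsilon^2)$ terms. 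All the expectations are finite: $\nu^I,w^I,\nu^B\in\mathcal A$ are in $L^2(\mathbb P\otimes \d t)$, the bound $(Q^I_t)^2 \leq T\int_0^T (\nu^I_u)^2\d u$ (and likewise for $R^I$) transfers this integrability to the inventory processes, and the Ornstein--Uhlenbeck processes $\alpha$ and $\alpha^I$ have finite moments of all orders. Dividing by $\varepsilon$ and letting $\varepsilon\to 0$ (the passage being immediate since $R(w^I)$ does not depend on $\varepsilon$) shows that the Gâteaux derivative exists everywhere and equals $L(\nu^I,w^I)$.

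Next, I would recast $L(\nu^I,w^I)$ as a linear functional in $w^I$ by eliminating $R^I$. For the terms $R^I_t(b\nu^B_t+\alpha^I_t+\alpha_t)$ and $-2\phi^I Q^I_t R^I_t$, I would substitute $R^I_t=\int_0^t w^I_u\,\d u$ and apply Fubini's theorem to obtain
\begin{align*}
\mathbb E\!\left[\int_0^T\! R^I_t\big(b\nu^B_t+\alpha^I_t+\alpha_t - 2\phi^I Q^I_t\big)\d t\right] = \mathbb E\!\left[\int_0^T\! w^I_t \!\int_t^T\!\!\big(b\nu^B_u+\alpha^I_u+\alpha_u -2\phi^I Q^I_u\big)\d u\, \d t\right].
\end{align*}
For the mixed term $-2a^I(R^I_t\nu^I_t + Q^I_t w^I_t)$, the key observation is the product-rule identity $\frac{\d}{\d t}(R^I_t Q^I_t) = R^I_t\nu^I_t + Q^I_t w^I_t$, which yields $\int_0^T (R^I_t\nu^I_t + Q^I_t w^I_t)\,\d t = R^I_T Q^I_T = Q^I_T\int_0^T w^I_t\,\d t$, so that this contribution equals $\mathbb E\big[\int_0^T w^I_t (-2a^I Q^I_T)\,\d t\big]$ (the random variable $-2a^I Q^I_T$ is $\mathcal F_T$-measurable, but being a constant in $t$ it can be pulled inside the $\d t$ integral without any measurability issue). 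Combining the three rewritings and the term $-2\eta^I\nu^I_t w^I_t$ gives exactly \eqref{Gateaux_trader}.

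There is no real obstacle here: the only delicate point is the simultaneous use of Fubini's theorem for three different integrands, for which one checks absolute integrability using Cauchy--Schwarz on $\mathbb E[\int_0^T |w^I_t|\cdot \int_t^T |\cdot|\,\d u\,\d t]$ together with the $L^2$-bounds on $\nu^B$, $\alpha$, $\alpha^I$, and $Q^I$ listed above.
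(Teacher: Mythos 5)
Your proof is correct and follows essentially the same route as the paper's: introduce the perturbed inventory $\tilde Q^I_t = \int_0^t w^I_u\,\d u$, expand the objective in $\varepsilon$ to isolate the linear term, take the limit, and then rearrange the linear term via Fubini and the product rule into the announced form. The paper simply states "we can write this as" for the rearrangement; your version spells out the Fubini step, the identity $\int_0^T (R^I_t\nu^I_t + Q^I_t w^I_t)\,\d t = R^I_T Q^I_T$, and the integrability checks that justify interchanging integrals — all of which are the right (and only) ingredients.
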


\begin{proof}
    Let $\nu^I, w^I \in \Ac$. The Gâteaux derivative of $H^{I,\nu^B}$ at point $\nu^I$ in the direction of $w^I$, if it exists, is defined as
    $$\big\langle D H^{I,\nu^B} (\nu^I), w^I\big\rangle = \underset{\varepsilon \searrow 0}{\lim} \frac{H^{I,\nu^B} (\nu^I+\varepsilon w^I) - H^{I,\nu^B} (\nu^I)}{\varepsilon}. $$
    Let $\varepsilon >0$, and define 
    $$\tilde Q^I_t = \int_0^t w^I_t \d t,$$
    for all $t\in [0,T]$. Then a direct computation gives us
    \begin{align*}
        H^{I,\nu^B} (\nu^I+\varepsilon w^I) &= H^{I,\nu^B} (\nu^I) + \varepsilon \mathbb E \Bigg[\int_0^T \bigg\{ \tilde Q^I_t \Big(b\,\nu^B_t + \alpha^I_t +\alpha_t \Big) - 2\,\eta^I \nu^I_t w^I_t - 2\,a^I Q^I_t w^I_t - 2\,a^I \tilde Q^I_t \nu^I_t  - 2\,\phi^I Q^I_t \tilde Q^I_t \bigg\} \d t \Bigg]\\
        &\quad + \varepsilon^2 \mathbb E \left[ \int_0^T \left\{ -\eta^I \left( w^I_t \right)^2 - 2\,a^I \tilde Q^I_t w^I_t - \phi^I \left( \tilde Q^I_t \right)^2 \right\}\d t \right].
    \end{align*}
    Let us denote by $A$ the term
    $$A:= \mathbb E \left[ \int_0^T \left\{ -\eta^I \left( w^I_t \right)^2 - 2\,a^I \tilde Q^I_t w^I_t - \phi^I \left( \tilde Q^I_t \right)^2 \right\}\d t \right].$$
    Therefore, we have
    \begin{align*}
        \frac{H^{I,\nu^B} \!(\nu^I+\varepsilon w^I)\! -\! H^{I,\nu^B}\! (\nu^I)}{\varepsilon} &=  \mathbb E\! \Bigg[\!\int_0^T \!\!\bigg\{\! \tilde Q^I_t \Big(b\,\nu^B_t + \alpha^I_t +\alpha_t \Big) - 2\,\eta^I \nu^I_t w^I_t - 2\,a^I Q^I_t w^I_t - 2\,a^I \tilde Q^I_t \nu^I_t  - 2\,\phi^I Q^I_t \tilde Q^I_t\! \bigg\} \d t \Bigg]\! +\! \varepsilon A.
    \end{align*}
    We can write this as
    \begin{align*}
        \frac{H^{I,\nu^B} \!(\nu^I+\varepsilon w^I)\! -\! H^{I,\nu^B}\! (\nu^I)}{\varepsilon} &=  \mathbb E\! \Bigg[\!\int_0^T \!\! w^I_t \bigg\{\! - 2\,\eta^I \nu^I_t - 2\,a^I Q^I_T + \int_t^T \Big(b\,\nu^B_u + \alpha^I_u + \alpha_u  - 2\,\phi^I Q^I_u \Big) \d u  \! \bigg\} \d t \Bigg]\! +\! \varepsilon A.
    \end{align*}
    Taking the limit as $\varepsilon \searrow 0$ finally yields the result.
    
\end{proof}

\begin{theorem}
    We have that 
    $$\nu^{I,\star} = \underset{\nu^I \in \mathcal A}{\arg \max }\  H^{I,\nu^B} (\nu^I)$$
    if and only if $\nu^{I,\star}$ is the unique strong solution to the FBSDE
    \begin{align}\label{BSDE_trader}
    \begin{cases}
        -\d \left(2\,\eta^I \nu^{I,\star}_t \right) &= \left(b\,\nu^B_t + \alpha^I_t + \alpha_t - 2 \phi^I Q^{I,\star}_t \right) \d t - \d Z^I_t,\\
        2\,\eta^I \nu^{I,\star}_T &=-2\,a^I Q^{I, \star}_T,
    \end{cases}
    \end{align}
    where $Z^I \in \mathbb H^2_T$ is an $\mathbb F-$adapted $\mathbb P-$martingale.
\end{theorem}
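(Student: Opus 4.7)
The plan is to combine Lemma \ref{concaveI} (strict concavity up to null sets) with Lemma \ref{gateauxI} (Gâteaux differentiability) to reduce optimality to a first-order condition, and then to rewrite that first-order condition as a backward SDE by identifying a martingale through the tower property. By the two lemmas, $\nu^{I,\star}\in\mathcal A$ maximises $H^{I,\nu^B}$ if and only if $\langle D H^{I,\nu^B}(\nu^{I,\star}), w^I\rangle = 0$ for every $w^I\in\mathcal A$.

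Writing $\Lambda_t$ for the bracketed expression in \eqref{Gateaux_trader}, this first-order condition reads $\mathbb E\big[\int_0^T w^I_t \Lambda_t \,\d t\big]=0$ for all $w^I\in\mathcal A$. Since $w^I_t$ is $\mathcal F_t$-measurable, the tower property allows me to replace $\Lambda_t$ by its $\mathcal F_t$-conditional expectation, and varying $w^I$ over a rich enough subclass of $\mathcal A$ (e.g.\ truncated sign processes on small time windows) forces $\mathbb E[\Lambda_t\mid\mathcal F_t]=0$ for $\mathbb P\otimes \d t$-almost every $(\omega,t)$. Because $\nu^{I,\star}$ is $\mathbb F$-adapted, this yields
$$
2\,\eta^I \nu^{I,\star}_t \;=\; \mathbb E\!\left[-2\,a^I Q^{I,\star}_T + \int_t^T\!\left(b\,\nu^B_u + \alpha^I_u + \alpha_u - 2\,\phi^I Q^{I,\star}_u\right)\d u \,\Big|\, \mathcal F_t\right].
$$

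Next I would set $Y_t := 2\,\eta^I \nu^{I,\star}_t$ and observe that $Z^I_t := Y_t + \int_0^t \big(b\,\nu^B_u + \alpha^I_u + \alpha_u - 2\,\phi^I Q^{I,\star}_u\big)\d u$ is, by the display above, precisely the $\mathbb F$-martingale generated by the square-integrable terminal random variable $-2\,a^I Q^{I,\star}_T + \int_0^T(\cdots)\d u$. Its membership in $\mathbb H^2_T$ follows from $\nu^B,\nu^{I,\star}\in\mathcal A$ together with standard $L^2$ bounds for the Ornstein--Uhlenbeck processes $\alpha,\alpha^I$. Differentiating the definition of $Z^I$ yields exactly the forward equation of \eqref{BSDE_trader}, and evaluating at $t=T$ recovers the terminal condition $Y_T=-2\,a^I Q^{I,\star}_T$. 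The reverse implication follows by retracing these steps: if $(\nu^{I,\star},Z^I)$ solves \eqref{BSDE_trader}, integrating between $t$ and $T$ and taking $\mathcal F_t$-conditional expectations identifies $2\,\eta^I \nu^{I,\star}_t$ with the bracket in \eqref{Gateaux_trader}, hence $\langle D H^{I,\nu^B}(\nu^{I,\star}),w^I\rangle=0$ for all $w^I\in\mathcal A$, and Lemma \ref{concaveI} upgrades this critical point to the unique global maximiser.

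Uniqueness of the strong FBSDE solution is then inherited from strict concavity: any two strong solutions must each be the global maximiser of $H^{I,\nu^B}$, and hence agree $\mathbb P\otimes \d t$-almost everywhere. The step I expect to be the most delicate is the passage from the integral identity $\mathbb E\big[\int_0^T w^I_t \Lambda_t\,\d t\big]=0$ to the pointwise adapted identity $\mathbb E[\Lambda_t\mid\mathcal F_t]=0$: one must verify that $t\mapsto\mathbb E[\Lambda_t\mid\mathcal F_t]$ admits a jointly measurable (indeed progressive) version, and check that the approximating test processes lie in $\mathcal A$, with the needed $L^2$ bounds coming from Doob's inequality applied to the martingale piece of $\Lambda$.
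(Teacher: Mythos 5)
Your proposal takes essentially the same route as the paper: invoke Lemmas \ref{concaveI} and \ref{gateauxI} together with the Ekeland--Temam characterisation to reduce optimality to a vanishing Gâteaux derivative, localise that integral condition to the pointwise adapted identity $\mathbb E[\Lambda_t\mid\mathcal F_t]=0$, split the conditional expectation into $\int_0^t$ plus a martingale $Z^I$ (your $Z^I_t = Y_t + \int_0^t(\cdots)\,\d u$ is algebraically identical to the paper's $Z^I_t = -\mathbb E[2a^IQ^{I,\star}_T - \int_0^T(\cdots)\d u\mid\mathcal F_t]$), and run the converse by taking conditional expectations of the integrated FBSDE. The only difference is that you flag and fill in a few technical points the paper states without elaboration -- the localisation argument via test directions, the $\mathbb H^2_T$ membership, and uniqueness via strict concavity -- but these are refinements of the same argument, not a different one.
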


\begin{proof}
    As in \citeauthor{casgrain2020mean} \cite{casgrain2020mean}, using Lemmas \ref{concaveI} and \ref{gateauxI}, we can apply the result of \citeauthor{ekeland1999convex} \cite{ekeland1999convex} which states that
    $$ \big\langle D H^{I,\nu^B} (\nu^{I,\star}), w^I\big\rangle =0\quad \forall w^I \in \Ac \iff \nu^{I,\star} = \underset{\nu^I \in \mathcal A}{\arg \max }\  H^{I,\nu^B} (\nu^I).$$
    Therefore, it only remains to prove that $\big\langle D H^{I,\nu^B} (\nu^{I,\star}), w^I\big\rangle =0\quad \forall w^I \in \Ac$ if and only if $\nu^{I,\star}$ is solution to the FBSDE \eqref{BSDE_trader}.\\

    Let us first assume that $\big\langle D H^{I,\nu^B} (\nu^{I,\star}), w^I\big\rangle =0$ for all $w^I \in \Ac$. This implies that 
    $$\mathbb E \left[ - 2\,\eta^I \nu^{I,\star}_t - 2\,a^I Q^I_T + \int_t^T \left( b\,\nu^B_u +\alpha^I_u + \alpha_u - 2 \phi^I Q^{I,\star}_u \right) \d u \bigg| \mathcal F_t  \right] = 0$$
    almost surely for all $t\in [0,T]$. Therefore,
    \begin{align*}
        -2\,\eta^I &\nu^{I,\star}_t =  \mathbb E \left[ 2\,a^I Q^{I,\star}_T -  \int_t^T \left( b\,\nu^B_u +\alpha^I_u + \alpha_u  - 2 \phi^I Q^{I,\star}_u \right) \d u \bigg| \mathcal F_t  \right]\\
        &= \int_0^t\!\!\! \left( b\,\nu^B_u +\alpha^I_u + \alpha_u  - 2 \phi^I Q^{I,\star}_u \right)\! \d u + \mathbb E \left[ \! 2\,a^I Q^{I,\star}_T \!-\!  \int_0^T\!\!\! \left( b\,\nu^B_u +\alpha^I_u + \alpha_u - 2 \phi^I Q^{I,\star}_u \right)\! \d u \bigg| \mathcal F_t  \right]\\
        &= \int_0^t \left( b\,\nu^B_u +\alpha^I_u + \alpha_u - 2 \phi^I Q^{I,\star}_u \right) \d u - Z^I_t,
    \end{align*}
    where the process $Z^I$ given by
    $$Z^I_t := -\mathbb E \left[ 2\,a^I Q^{I,\star}_T \!-\!  \int_0^T \left( b\,\nu^B_u +\alpha^I_u + \alpha_u  - 2 \phi^I Q^{I,\star}_u \right) \d u \bigg| \mathcal F_t  \right]$$
    is a martingale, by definition. Hence it is clear that $\nu^{I,\star}$ is solution to the FBSDE \eqref{BSDE_trader}.\\

    Conversely, assume that $\nu^{I,\star}$ is solution to the FBSDE \eqref{BSDE_trader}. Then $\nu^{I,\star}$ can be represented implicitly as
    \begin{align*}
        2\,\eta^I \nu^{I,\star}_t = \mathbb E \Bigg[ - 2\,a^I Q^{I,\star}_T + \int_t^T\left( b\,\nu^B_u +\alpha^I_u + \alpha_u - 2 \phi^I Q^{I,\star}_u \right) \d u  \bigg| \mathcal F_t\Bigg].
    \end{align*}
    Plugging this into the expression of the Gâteaux derivative \eqref{Gateaux_trader} in Lemma \ref{gateauxI}, it is clear that it vanishes almost surely for any $w^I \in \Ac$.
\end{proof}

\subsection{The broker's optimality condition}
Similar to the above, in this subsection, we  show that (i) the functional $H^{B,\mu}$ is strictly concave, (ii) G\^ateaux differentiable, and (iii) we characterise the optimal trading strategy of the broker.

\begin{lemma}\label{concaveB}
Let  $(\mu_t)_{t\in [0,T]}$ with values in $\mathcal P (\mathbb R)$ be the distribution of the execution rates of the informed traders conditionally to $\mathcal F_t$.    The functional $H^{B,\mu}(\cdot):\mathcal A\to \R$ defined in \eqref{broker_obj} is strictly concave up to a $\mathbb P \otimes \d t-$null set.
\end{lemma}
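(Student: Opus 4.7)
The plan is to follow the template of Lemma \ref{concaveI}, with two extra features to handle: the $\mu$-dependent pieces inside the functional and the bilinear cross-term between $\nu^B$ and $\bar Q^B$. I would take $\zeta,\nu \in \mathcal A$ differing on a set $A$ of positive $\mathbb P \otimes \d t$-measure, fix $\rho \in (0,1)$, and observe that since the mean-field rate $\bar\nu$ does not depend on the broker's control, the map $\nu^B \mapsto \bar Q^B$ is affine, so $\bar Q^{B,\rho\zeta+(1-\rho)\nu}_t = \rho\,\bar Q^{B,\zeta}_t + (1-\rho)\,\bar Q^{B,\nu}_t$. Consequently every term of $H^{B,\mu}$ that is constant in $\nu^B$ (namely $\eta^I \int_{\mathbb R} x^2 \mu_t(\d x)$) or affine in $\nu^B$ (namely $\bar Q^B_t \alpha_t$ and $2\,a^B \bar Q^B_t \int_{\mathbb R} x\,\mu_t(\d x)$) drops out of the concavity comparison, leaving only the genuinely quadratic terms $-(2\,a^B - b)\bar Q^B_t \nu^B_t - \eta^B (\nu^B_t)^2 - \phi^B (\bar Q^B_t)^2$.

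Expanding these three terms at the convex combination exactly as in Lemma \ref{concaveI} yields
\begin{align*}
&H^{B,\mu}(\rho\,\zeta+(1-\rho)\,\nu) - \rho\,H^{B,\mu}(\zeta) - (1-\rho)\,H^{B,\mu}(\nu) \\
&\qquad = \rho(1-\rho)\,\mathbb E\!\left[\int_0^T \left\{ \eta^B(\zeta_t-\nu_t)^2 + (2\,a^B-b)(\zeta_t-\nu_t)(\bar Q^{B,\zeta}_t-\bar Q^{B,\nu}_t) + \phi^B(\bar Q^{B,\zeta}_t-\bar Q^{B,\nu}_t)^2\right\}\d t\right].
\end{align*}
The key identity, inherited from the proof of Lemma \ref{concaveI}, is that $\bar Q^{B,\zeta}_t - \bar Q^{B,\nu}_t = \int_0^t(\zeta_u - \nu_u)\,\d u$ (the $\bar\nu_u$ contributions cancel), from which the same integration-by-parts calculation gives $\mathbb{E}\!\left[\int_0^T (\zeta_t-\nu_t)(\bar Q^{B,\zeta}_t-\bar Q^{B,\nu}_t)\,\d t\right] = \tfrac{1}{2}\,\mathbb E\!\left[\left(\int_0^T(\zeta_t-\nu_t)\,\d t\right)^{2}\right] \geq 0$. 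Invoking the standing assumption $2\,a^B - b \geq 0$ makes the bilinear contribution non-negative; the $\phi^B$ term is non-negative as a square; and the $\eta^B$ term is strictly positive because $\eta^B > 0$ and $(\mathbb P \otimes \d t)(A) > 0$. Summing these produces the strict inequality $H^{B,\mu}(\rho\,\zeta+(1-\rho)\,\nu) > \rho\,H^{B,\mu}(\zeta) + (1-\rho)\,H^{B,\mu}(\nu)$.

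The main obstacle is the bilinear $\bar Q^B \nu^B$ cross-term: its sign is a priori undetermined and could in principle destroy concavity. The hypothesis $2\,a^B - b \geq 0$ is imposed precisely to make this contribution non-negative once paired with the integration-by-parts identity; without it one could engineer directions in which the cross-term dominates the strictly positive $\eta^B$ contribution. Apart from this point, everything is a mechanical repetition of the argument from Lemma \ref{concaveI}.
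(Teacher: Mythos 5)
Your proof is correct and follows essentially the same route as the paper: both reduce the convexity gap to the three quadratic terms, arrive at the identity
\begin{align*}
H^{B,\mu}(\rho\,\zeta+(1-\rho)\,\nu) - \rho\,H^{B,\mu}(\zeta) - (1-\rho)\,H^{B,\mu}(\nu) = \rho(1-\rho)\,\mathbb E\!\left[\int_0^T\!\left\{\eta^B(\zeta_t-\nu_t)^2 + (2a^B-b)(\zeta_t-\nu_t)(\bar Q^{B,\zeta}_t-\bar Q^{B,\nu}_t) + \phi^B(\bar Q^{B,\zeta}_t-\bar Q^{B,\nu}_t)^2\right\}\d t\right],
\end{align*}
and then control the cross-term via the integration-by-parts identity $\mathbb E\left[\int_0^T(\zeta_t-\nu_t)(\bar Q^{B,\zeta}_t-\bar Q^{B,\nu}_t)\,\d t\right]=\tfrac12\,\mathbb E\left[\left(\int_0^T(\zeta_t-\nu_t)\,\d t\right)^2\right]\geq 0$ together with $2a^B-b\geq 0$, getting strictness from the $\eta^B$ term. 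Your preliminary observation that the $\mu$-dependent constant term and the affine-in-$\nu^B$ terms cancel identically in the convexity gap is a modest streamlining of the paper's full-expansion bookkeeping, but the substance of the argument is the same.
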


\begin{proof}
Fix $(\mu_t)_{t\in [0,T]}$ and let $\zeta,\nu \in\mathcal A$. Let $A\in \mathcal A\otimes \mathcal B([0,T])$  with $\mathfrak{m}(A) > 0$ where $\mathfrak{m}:=\mathbb P \otimes \d t$ and  for $(\omega,t)\in A$ we have that $\eta_t(\omega) \neq \nu_t(\omega)$. Let $\rho\in(0,1)$, we need to show that 
\begin{equation*}
    H^{B,\mu}(\rho\,\zeta + (1-\rho)\,\nu) > \rho\,H^{B,\mu}(\zeta) + (1-\rho)\,H^{B,\mu}(\nu)\,.
\end{equation*}
Similar to the proof of strict concavity for the informed trader, observe that 
\begin{equation*}
    \bar{Q}^{B,\rho\,\zeta + (1-\rho)\,\nu}_t = \int_0^t \left( \rho\,\zeta_u + (1-\rho)\,\nu_u - \bar \nu_u \right) \d u = \rho\,\bar{Q}^{B,\zeta}_t + (1-\rho)\,\bar{Q}^{B,\nu}_t\,,
\end{equation*}
where (as before) we use the notation of having the controls in the superscript.
It follows that 
\begin{align*}
    H^{B,\mu}(\rho\,\zeta + (1-\rho)\,\nu) &= \mathbb E \Bigg[  \int_0^T \bigg\{ \bar Q^{B,\rho\,\zeta + (1-\rho)\,\nu}_t \left(b\left(\rho\,\zeta_t + (1-\rho)\,\nu_t\right)  + \alpha_t \right) + \eta^I \int_{\mathbb R} x^2 \mu_t(\d x) - \eta^B \left( \rho\,\zeta_t + (1-\rho)\,\nu_t\right)^2 \\
    &\qquad\qquad\quad - 2\,a^B \bar Q^{B,\rho\,\zeta + (1-\rho)\,\nu}_t \left(\rho\,\zeta_t + (1-\rho)\,\nu_t - \int_\mathbb R x\, \mu_t(dx) \right) -  \phi^B\left( \bar Q^{B,\rho\,\zeta + (1-\rho)\,\nu}_t \right)^2 \bigg\} \d t\Bigg]\\
    & = \rho\,H^{B,\mu}(\zeta) + (1-\rho)\,H^{B,\mu}(\nu)\\
    & \qquad 
    + \rho\,(1-\rho)\,\mathbb E \Bigg[  \int_0^T \bigg\{
    + \bar{Q}^{B,\zeta}_t\, b\,\nu_t + \bar{Q}^{B,\nu}_t\, b\,\zeta_t  - 2\,\eta^B\,\zeta_t\,\nu_t \\
    & \hspace{4cm} - 2\,a^B\,\left(\bar{Q}^{B,\zeta}_t\,\nu_t +\bar{Q}^{B,\nu}_t\,\zeta_t\right) - 2\,\phi^B\,\bar{Q}^{B,\zeta}_t\,\bar{Q}^{B,\nu}_t
    \bigg\} \d t\Bigg]\\
    & \qquad 
    - \rho\,(1-\rho)\,\mathbb E \Bigg[  \int_0^T \bigg\{   
    + \bar{Q}^{B,\zeta}_t\, b\,\zeta_t + \bar{Q}^{B,\nu}_t\, b\,\nu_t - \eta^B\,\left(\zeta^2_t -\nu^2_t\right) \\
    & \hspace{4cm} - 2\,a^B\,\left(\bar{Q}^{B,\zeta}_t\,\zeta_t + \bar{Q}^{B,\nu}_t\,\nu_t\right) - \phi^B\,\left(\bar{Q}^{B,\zeta}_t\right)^2 - \phi^B\,\left(\bar{Q}^{B,\nu}_t\right)^2
    \bigg\} \d t\Bigg]\,.
\end{align*}
It then follows that 
\begin{align*}
    & H^{B,\mu}(\rho\,\zeta + (1-\rho)\,\nu) = \rho\,H^{B,\mu}(\zeta) + (1-\rho)\,H^{B,\mu}(\nu)\\
    & \qquad 
    + \rho\,(1-\rho)\,\mathbb E \Bigg[  \int_0^T \bigg\{
    \left(2\,a^B - b\right)\,\left(\bar{Q}^{B,\zeta}_t - \bar{Q}^{B,\nu}_t\right)\left(\zeta_t -\nu_t\right) 
    +\eta^B\left(\zeta_t - \nu_t\right)^2 +\phi^B\left(\bar{Q}^{B,\zeta}_t - \bar{Q}^{B,\nu}_t\right)^2
    \bigg\} \d t\Bigg]\,.
\end{align*}
Similar to above, 
\begin{align*}
\mathbb{E}\left[\int_0^T \left( \bar{Q}^{B,\zeta}_t - \bar{Q}^{B,\nu}_t\right) \left( \zeta_t - \nu_t\right) \d t\right] & =  \frac{1}{2}\mathbb{E}\left[\left(\int_0^T \zeta_t - \nu_t\, \d t\right)^2\right]\geq 0\,,
\end{align*}
and clearly
\begin{equation*}
    \mathbb{E}\left[\int_0^T \left(\bar{Q}^{B,\zeta}_t - \bar{Q}^{B,\nu}_t\right)^2 \d t\right] \geq 0\,.
\end{equation*}
Given that $\mathfrak{m}(A)>0$, then 
\begin{equation*}
    \mathbb{E}\left[\int_0^T \left(\zeta_t - \nu_t\right)^2  \d t\right] > 0\,,
\end{equation*}
and this implies that $H^{B,\mu}(\rho\,\zeta + (1-\rho)\,\nu) >
\rho\,H^{B,\mu}(\zeta) + (1-\rho)\,H^{B,\mu}(\nu)$.
\end{proof}

\begin{lemma}\label{gateauxB}
    The functional $H^{B,\mu}$ defined in \eqref{broker_obj} is everywhere Gâteaux differentiable in $\mathcal A$. The Gâteaux derivative at a point $\nu^B \in \mathcal A$ in a direction $w^b \in \mathcal A$ is given by
    \begin{align}\label{Gateaux_broker}
        \big\langle D H^{B,\mu} (\nu^B), w^B\big\rangle = \mathbb E\! \left[ \int_0^T\!\!\!\!\! w^B_t \left\{ (b-2\,a^B) \bar Q^B_T \!-\! 2\,\eta^B \nu^B_t +\! \int_t^T\!\!\!\! \left( b\,\int_\mathbb R x\, \mu_u(\d x) + \alpha_u  - 2 \phi^B \bar Q^B_u \right)\! \d u \right\}\! \d t   \right]\!.
    \end{align}
\end{lemma}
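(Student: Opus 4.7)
The plan is to mirror the proof of Lemma \ref{gateauxI} essentially verbatim, with some additional bookkeeping needed to handle the mean-field terms. Fix $\nu^B, w^B \in \mathcal A$ and $\varepsilon > 0$, and set $\tilde Q^B_t := \int_0^t w^B_u\, \d u$. Since $\mu$ is a fixed mean-field input (hence so is $\bar \nu_t = \int_{\mathbb R} x\, \mu_t(\d x)$), the perturbation $\nu^B \mapsto \nu^B + \varepsilon w^B$ shifts $\bar Q^B$ to $\bar Q^B + \varepsilon \tilde Q^B$ but leaves $\bar \nu$ untouched. Because $H^{B,\mu}$ is a quadratic form in $(\nu^B, \bar Q^B)$, direct expansion gives
\begin{align*}
H^{B,\mu}(\nu^B + \varepsilon w^B) - H^{B,\mu}(\nu^B) = \varepsilon\, L(\nu^B, w^B) + \varepsilon^2\, R(w^B),
\end{align*}
where
\begin{align*}
R(w^B) = \mathbb E\left[\int_0^T \left\{ b\,\tilde Q^B_t\,w^B_t - \eta^B (w^B_t)^2 - 2 a^B \tilde Q^B_t\,w^B_t - \phi^B (\tilde Q^B_t)^2 \right\}\d t\right]
\end{align*}
is finite by $L^2$-admissibility of $w^B$ (which in turn yields $L^2$-boundedness of $\tilde Q^B$). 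Thus $\varepsilon\,R(w^B) \to 0$ as $\varepsilon \searrow 0$, and the Gâteaux derivative, if it exists, equals $L(\nu^B, w^B)$.

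The substantive step is to rewrite $L(\nu^B, w^B)$ in the form stated in the lemma. Collecting the $\varepsilon$-coefficients term by term from each of the pieces of $H^{B,\mu}$ yields
\begin{align*}
L(\nu^B, w^B) = \mathbb E\left[\int_0^T \left\{ \tilde Q^B_t\left( (b - 2 a^B)\nu^B_t + 2 a^B \bar \nu_t + \alpha_t - 2 \phi^B \bar Q^B_t\right) + w^B_t \left( (b - 2 a^B) \bar Q^B_t - 2 \eta^B \nu^B_t\right)\right\}\d t\right].
\end{align*}
Next, I would apply Fubini--Tonelli (justified by the $L^2$-bounds on $\nu^B, w^B, \bar\nu, \bar Q^B$) to rewrite $\int_0^T \tilde Q^B_t g_t\, \d t = \int_0^T w^B_t \int_t^T g_u \,\d u\, \d t$ for each of the $\tilde Q^B_t$ terms, producing a single integral weighted by $w^B_t$.

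It remains to convert the leftover coefficient $(b - 2 a^B) \bar Q^B_t$ into the $(b - 2 a^B) \bar Q^B_T$ appearing in \eqref{Gateaux_broker}. I would use $\bar Q^B_t = \bar Q^B_T - \int_t^T (\nu^B_u - \bar\nu_u)\, \d u$; the induced extra term $-(b - 2a^B)\int_t^T(\nu^B_u - \bar\nu_u)\,\d u$ then combines with the $\int_t^T[(b - 2a^B)\nu^B_u + 2a^B \bar \nu_u]\,\d u$ piece produced by the Fubini step, and a one-line simplification collapses their sum to $\int_t^T b \,\bar \nu_u\,\d u = \int_t^T b\int_{\mathbb R} x\, \mu_u(\d x)\,\d u$, matching the claimed expression.

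The main obstacle, if one can call it that, is purely algebraic: the presence of $\bar\nu$ and the need to peel off the terminal $\bar Q^B_T$ from inside the linear-in-$w^B$ piece makes the bookkeeping a touch denser than in Lemma \ref{gateauxI}, but no new analytical ingredient is required beyond admissibility and Fubini.
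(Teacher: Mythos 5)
Your proposal is correct and follows the same route as the paper: expand $H^{B,\mu}(\nu^B + \varepsilon w^B)$, isolate the $\varepsilon$-linear term, observe that the $\varepsilon^2$-coefficient is finite (so $\varepsilon\,R(w^B)\to 0$), and then use Fubini plus the identity $\bar Q^B_t = \bar Q^B_T - \int_t^T(\nu^B_u - \bar\nu_u)\,\d u$ to collapse the linear term into the stated form. The paper elides the Fubini and endpoint-substitution steps with ``We can write this as''; you have supplied them explicitly, and your simplification $\int_t^T\{(b-2a^B)\nu^B_u + 2a^B\bar\nu_u - (b-2a^B)(\nu^B_u - \bar\nu_u)\}\,\d u = \int_t^T b\,\bar\nu_u\,\d u$ is exactly the cancellation the paper relies on. One small bonus: your $\varepsilon$-linear term is the correct one; the paper's displayed intermediate expression contains a stray $-2a^B\bar Q^B_t\tilde Q^B_t$ that does not arise from expanding any term of $H^{B,\mu}$ (its only source of $\bar Q^B_t\tilde Q^B_t$ is the $\phi^B$ term), though this typo does not propagate to the paper's final formula \eqref{Gateaux_broker}.
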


\begin{proof}
    Let $\nu^B, w^B \in \Ac$. The Gâteaux derivative of $H^{B,\mu}$ at point $\nu^B$ in the direction of $w^B$, if it exists, is defined as
    $$\big\langle D H^{B,\mu} (\nu^B), w^B\big\rangle = \underset{\varepsilon \searrow 0}{\lim} \frac{H^{B,\mu} (\nu^B+\varepsilon w^B) - H^{B,\mu} (\nu^B)}{\varepsilon}. $$
    Let $\varepsilon >0$, and define 
    $$\tilde Q^B_t = \int_0^t w^B_t \d t,$$
    for all $t\in [0,T]$. Then a direct computation gives us
    \begin{align*}
        H^{B,\mu} (\nu^B+\varepsilon w^B) &= H^{B,\mu} (\nu^B) + \varepsilon \mathbb E \Bigg[\int_0^T \bigg\{ \tilde Q^B_t \Big(b\,\nu^B_t +\alpha_t \Big) + \Big(b-2\,a^B\Big)\bar Q^B_t w^B_t - 2\,\eta^B \nu^B_t w^B_t- 2\,a^B \bar Q^B_t \tilde Q^B_t\\
        &\quad- 2\,a^B \tilde Q^B_t\Big( \nu^B_t - \int_\mathbb R x\, \mu_t(\d x)\Big)  - 2\,\phi^B \bar Q^B_t \tilde Q^B_t \bigg\} \d t \Bigg]\\
        &\quad + \varepsilon^2 \mathbb E \left[ \int_0^T \left\{ \Big(b-2\,a^B\Big)\tilde Q^B_t w^B_t - \eta^B \Big( w^B_t\Big)^2 - \phi^B \Big( \tilde Q^B_t\Big)^2 \right\}\d t \right].
    \end{align*}
    Let us denote by $A$ the term
    $$A:= \mathbb E \left[ \int_0^T \left\{ \Big(b-2\,a^B\Big)\tilde Q^B_t w^B_t - \eta^B \Big( w^B_t\Big)^2 - \phi^B \Big( \tilde Q^B_t\Big)^2 \right\}\d t \right].$$
    Therefore, we have
    \begin{align*}
        \frac{H^{B,\mu} \!(\nu^B+\varepsilon w^B)\! -\! H^{B,\mu}\! (\nu^B)}{\varepsilon} &=  \mathbb E \Bigg[\int_0^T \bigg\{ \tilde Q^B_t \Big(b\,\nu^B_t +\alpha_t \Big) + \Big(b-2\,a^B\Big)\bar Q^B_t w^B_t - 2\,\eta^B \nu^B_t w^B_t- 2\,a^B \bar Q^B_t \tilde Q^B_t\\
        &\quad- 2\,a^B \tilde Q^B_t\Big( \nu^B_t - \int_\mathbb R x\, \mu_t(\d x)\Big)  - 2\,\phi^B \bar Q^B_t \tilde Q^B_t \bigg\} \d t \Bigg]\! +\! \varepsilon A.
    \end{align*}
    We can write this as
    \begin{align*}
         \frac{H^{B,\mu} \!(\nu^B+\varepsilon w^B)\! -\! H^{B,\mu}\! (\nu^B)}{\varepsilon} &=\mathbb E \Bigg[ \int_0^T w^B_t \bigg\{ \Big(b-2\,a^B \Big)\bar Q^B_T - 2\,\eta^B \nu^B_t\\
         &\quad + \int_t^T \Big(b\int_\mathbb R x\, \mu_u(\d x) + \alpha_u -2\,\phi^B \bar Q^B_u  \Big) \d u\bigg\} \d t  \Bigg]   +\! \varepsilon A.
    \end{align*}
    Taking the limit as $\varepsilon \searrow 0$ finally yields the result.
\end{proof}

\begin{theorem}
    We have that 
    $$\nu^{B,\star} = \underset{\nu^B \in \mathcal A}{\arg \max }\  H^{B,\mu} (\nu^B)$$
    if and only if $\nu^{B,\star}$ is the unique strong solution to the FBSDE
    \begin{align}\label{BSDE_broker}
    \begin{cases}
        -\d \left(2\,\eta^B \nu^{B,\star}_t \right) &= \left(b\,\bar \nu_t + \alpha_t  - 2 \phi^B \bar Q^{B,\star}_t \right) \d t - \d Z^B_t,\\
        2\,\eta^B \nu^{B,\star}_T &=(b-2\,a^B)\bar Q^{B,\star}_T,
    \end{cases}
    \end{align}
    where $Z^B \in \mathbb H^2_T$ is an $\mathbb F-$adapted $\mathbb P-$martingale.
\end{theorem}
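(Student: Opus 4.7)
The plan is to mimic closely the proof written for the representative informed trader, leveraging the two preparatory results that have just been proved for the broker. By Lemma \ref{concaveB}, the functional $H^{B,\mu}$ is strictly concave on $\mathcal A$ (up to a $\mathbb P\otimes\d t$-null set), and by Lemma \ref{gateauxB} it is everywhere Gâteaux differentiable with derivative given by \eqref{Gateaux_broker}. Invoking the standard convex-analysis result of \citeauthor{ekeland1999convex} \cite{ekeland1999convex} (exactly as in \citeauthor{casgrain2020mean} \cite{casgrain2020mean}), the maximiser characterisation reduces to
$$\nu^{B,\star} = \underset{\nu^B \in \mathcal A}{\arg\max}\, H^{B,\mu}(\nu^B) \iff \big\langle DH^{B,\mu}(\nu^{B,\star}), w^B \big\rangle = 0 \quad \forall w^B \in \mathcal A.$$
So the task reduces to showing that this first-order condition is equivalent to $\nu^{B,\star}$ being the unique strong solution to the FBSDE \eqref{BSDE_broker}.

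For the direct implication, I would first note that $\bar\nu_t = \int_{\mathbb R} x\,\mu_t(\d x)$ by definition, so the Gâteaux derivative \eqref{Gateaux_broker} can be written as $\mathbb E\bigl[\int_0^T w^B_t\, M_t\, \d t\bigr]$ where
$$M_t := (b-2a^B)\bar Q^{B,\star}_T - 2\eta^B \nu^{B,\star}_t + \int_t^T \!\! \bigl( b\,\bar\nu_u + \alpha_u - 2\phi^B \bar Q^{B,\star}_u \bigr)\,\d u.$$
A standard conditioning argument (pick $w^B_s = \mathbf 1_{A}\mathbf 1_{[t,t+h]}(s)$ for $A\in\mathcal F_t$, let $h\downarrow 0$, and vary $A$) yields $\mathbb E[M_t \mid \mathcal F_t] = 0$ almost surely for Lebesgue-almost every $t\in[0,T]$. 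Rearranging this identity, one obtains
$$2\eta^B \nu^{B,\star}_t = \int_0^t \bigl( b\,\bar\nu_u + \alpha_u - 2\phi^B \bar Q^{B,\star}_u \bigr)\,\d u - Z^B_t,$$
where
$$Z^B_t := -\mathbb E\!\left[ (b-2a^B)\bar Q^{B,\star}_T - \int_0^T \bigl( b\,\bar\nu_u + \alpha_u - 2\phi^B \bar Q^{B,\star}_u \bigr)\,\d u \,\bigg|\,\mathcal F_t \right]$$
is an $\mathbb F$-adapted $\mathbb P$-martingale by construction. The terminal condition $2\eta^B\nu^{B,\star}_T = (b-2a^B)\bar Q^{B,\star}_T$ follows by evaluating at $t=T$. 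This gives precisely \eqref{BSDE_broker}. Admissibility of $\nu^{B,\star}$ together with boundedness of coefficients ensures $Z^B \in \mathbb H^2_T$.

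For the converse, assume $\nu^{B,\star}$ solves \eqref{BSDE_broker}. Integrating the BSDE from $t$ to $T$ and using the terminal condition yields the implicit representation
$$2\eta^B \nu^{B,\star}_t = \mathbb E\!\left[ (b - 2a^B)\bar Q^{B,\star}_T + \int_t^T \!\!\bigl( b\,\bar\nu_u + \alpha_u - 2\phi^B \bar Q^{B,\star}_u \bigr)\,\d u \,\bigg|\,\mathcal F_t \right].$$
Substituting this into \eqref{Gateaux_broker} and using the tower property (the factor $w^B_t$ is $\mathcal F_t$-measurable) shows that the Gâteaux derivative vanishes for every $w^B \in \mathcal A$, concluding the equivalence. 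Uniqueness of the strong solution follows from strict concavity of $H^{B,\mu}$: two solutions of the FBSDE would both be maximisers, hence must coincide $\mathbb P\otimes\d t$-a.e.

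The only delicate point compared to the trader's proof is the terminal condition: the coefficient $(b-2a^B)$ (rather than $-2a^B$) appears because, inside the objective \eqref{broker_obj}, the term $b\,\bar Q^B_t\,\nu^B_t$ produces an extra contribution when one perturbs $\nu^B$ and integrates by parts. The standing assumption $2a^B - b \geq 0$ is exactly what guarantees that this modified terminal condition is consistent with the strict concavity established in Lemma \ref{concaveB}, so no additional difficulty arises beyond careful bookkeeping of the boundary term.
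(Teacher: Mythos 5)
Your plan reproduces the paper's proof essentially step by step: invoke Lemmas \ref{concaveB} and \ref{gateauxB} to reduce, via the result of \citeauthor{ekeland1999convex} \cite{ekeland1999convex}, the maximiser characterisation to the vanishing of the Gâteaux derivative; then prove the forward implication by conditioning on $\mathcal F_t$, and the converse by substituting the implicit representation of $2\eta^B\nu^{B,\star}_t$ into \eqref{Gateaux_broker} and using the tower property. Your added remarks on uniqueness (two FBSDE solutions would both be maximisers of a strictly concave functional, hence equal $\mathbb P\otimes\d t$-a.e.) and on the origin of the $(b-2a^B)$ coefficient are correct.

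There is one sign slip in the forward direction. From $\mathbb E[M_t\mid\mathcal F_t]=0$ the rearrangement should read
\begin{equation*}
-2\eta^B\,\nu^{B,\star}_t = \int_0^t \bigl( b\,\bar\nu_u + \alpha_u - 2\phi^B \bar Q^{B,\star}_u \bigr)\,\d u - Z^B_t,
\qquad
Z^B_t := \mathbb E\!\left[ (b-2a^B)\bar Q^{B,\star}_T + \int_0^T \bigl( b\,\bar\nu_u + \alpha_u - 2\phi^B \bar Q^{B,\star}_u \bigr)\,\d u \,\bigg|\,\mathcal F_t \right],
\end{equation*}
not the version you wrote with $+2\eta^B\nu^{B,\star}_t$ on the left. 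As written, your identity does not follow from the first-order condition, and upon differentiation it yields $-\d(2\eta^B\nu^{B,\star}_t)=-(b\bar\nu_t+\alpha_t-2\phi^B\bar Q^{B,\star}_t)\,\d t+\d Z^B_t$, i.e.\ the opposite drift sign from \eqref{BSDE_broker}. This is a mechanical error (a dropped minus when mimicking the informed trader's proof, where the paper writes $-2\eta^I\nu^{I,\star}_t=\cdots$) and does not affect the architecture of the argument, but it needs fixing for the conclusion to match the stated FBSDE.
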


\begin{proof}
    As before, using Lemmas \ref{concaveB} and \ref{gateauxB}, we can apply the result of \citeauthor{ekeland1999convex} \cite{ekeland1999convex} which states that
    $$ \big\langle D H^{B,\mu} (\nu^{B,\star}), w^B\big\rangle =0\quad \forall w^B \in \Ac \iff \nu^{B,\star} = \underset{\nu^B \in \mathcal A}{\arg \max }\  H^{B,\mu} (\nu^B).$$
    Therefore, it only remains to prove that $\big\langle D H^{B,\mu} (\nu^{B,\star}), w^B\big\rangle =0\quad \forall w^B \in \Ac$ if and only if $\nu^{B,\star}$ is solution to the FBSDE in \eqref{BSDE_broker}.\\

    Let us first assume that $\big\langle D H^{B,\mu} (\nu^{B,\star}), w^B\big\rangle =0\quad \forall w^B \in \Ac$. This implies that 
    $$\mathbb E \left[ \Big(b-2\,a^B \Big)\bar Q^{B,\star}_T - 2\,\eta^B \nu^{B,\star}_t + \int_t^T \Big(b\int_\mathbb R x\, \mu_u(\d x) + \alpha_u -2\,\phi^B \bar Q^{B,\star}_u  \Big) \d u \bigg| \mathcal F_t  \right] = 0$$
    almost surely for all $t\in [0,T]$. Therefore,
    \begin{align*}
        -2\,\eta^B \nu^{B,\star}_t &=  \mathbb E \left[- \Big(b-2\,a^B \Big)\bar Q^{B,\star}_T - \int_t^T \Big(b\!\!\int_\mathbb R x\, \mu_u(\d x) + \alpha_u -2\,\phi^B \bar Q^{B,\star}_u  \Big) \d u \bigg| \mathcal F_t  \right]\\
        &= \int_0^t\!\!\!  \Big(b\int_\mathbb R x\, \mu_u(\d x) + \alpha_u -2\,\phi^B \bar Q^{B,\star}_u  \Big)\! \d u - \mathbb E \left[ \!  \Big(b-2\,a^B \Big)\bar Q^{B,\star}_T + \int_0^T\!\!\! \Big(b\!\!\int_\mathbb R x\, \mu_u(\d x) + \alpha_u -2\,\phi^B \bar Q^{B,\star}_u  \Big) \d u \bigg| \mathcal F_t  \right]\\
        &= \int_0^t\!\!\!  \Big(b\int_\mathbb R x\, \mu_u(\d x) + \alpha_u -2\,\phi^B \bar Q^{B,\star}_u  \Big)\! \d u - Z^B_t,
    \end{align*}
    where the process $Z^B$ given by
    $$Z^B_t := \mathbb E \left[ \!  \Big(b-2\,a^B \Big)\bar Q^{B,\star}_T + \int_0^T\!\!\! \Big(b\!\!\int_\mathbb R x\, \mu_u(\d x) + \alpha_u -2\,\phi^B \bar Q^{B,\star}_u  \Big) \d u \bigg| \mathcal F_t  \right]$$
    is a martingale, by definition. Hence it is clear that $\nu^{B,\star}$ is solution to the FBSDE \eqref{BSDE_broker}.\\

    Conversely, assume that $\nu^{B,\star}$ is solution to the FBSDE \eqref{BSDE_broker}. Then $\nu^{B,\star}$ can be represented implicitly as
    \begin{align*}
        2\,\eta^B \nu^{B,\star}_t = \mathbb E \left[ \!  \Big(b-2\,a^B \Big)\bar Q^{B,\star}_T + \int_t^T\!\!\! \Big(b\!\!\int_\mathbb R x\, \mu_u(\d x) + \alpha_u -2\,\phi^B \bar Q^{B,\star}_u  \Big) \d u \bigg| \mathcal F_t  \right].
    \end{align*}
    Plugging this into the expression of the Gâteaux derivative \eqref{Gateaux_broker} in Lemma \ref{gateauxB}, it is clear that it vanishes almost surely for any $w^B \in \Ac$.
\end{proof}

\subsection{The mean field FBSDE system}

Finally, at  equilibrium, we have the following system of coupled FBSDEs
\begin{align}\label{MFGFBSDE_system}
\begin{cases}
    -\d \left(2\,\eta^I \nu^{I,\star}_t \right) &= \left(b\,\nu^B_t + \alpha^I_t + \alpha_t - 2\,\phi^I Q^{I,\star}_t \right) \d t - \d Z^I_t,\\
    -\d \left(2\,\eta^B \nu^{B,\star}_t \right) &= \left(b\,\bar \nu^{\star}_t + \alpha_t  - 2 \phi^B \bar Q^{B,\star}_t \right) \d t - \d Z^B_t,\\
        2\,\eta^I \nu^{I,\star}_T &= -2\,a^I Q^{I, \star}_T\\
        2\,\eta^B \nu^{B,\star}_T &= - (2\,a^B-b)\bar Q^{B,\star}_T.
\end{cases}
\end{align}
Moreover, we know that at the equilibrium
\begin{align}\label{cond_nubar}
    \bar \nu^{\star}_t = \mathbb E \left[ \nu^{I,\star} \left| \mathcal F^\alpha_t \right. \right].
\end{align}

\subsection{The closed-form solution to the FBSDE}
Next, we solve the FBSDE systems we derived. First, we solve the mean-field system of FBSDEs explicitly, and then we solve the FBSDE of an individual informed trader.

\subsubsection{The optimal strategy of the broker}

In the equilibrium, we solve the FBSDE system
\begin{align}\label{MFGFBSDE_system2}
\begin{cases}
    -\d \left(2\,\eta^I \bar{\nu}^{\star}_t \right) &= \left(b\,\nu^{B,\star}_t  + \alpha_t - 2\,\bar \phi \bar{Q}^{\star}_t \right) \d t - \d \bar{Z}^I_t,\\
    -\d \left(2\,\eta^B \nu^{B,\star}_t \right) &= \left(b\,\bar \nu^{\star}_t + \alpha_t  - 2 \phi^B \bar Q^{B,\star}_t \right) \d t - \d Z^B_t,\\
        2\,\eta^I \bar{\nu}^{\star}_T &= -2\,\bar a \bar{Q}^{\star}_T\\
        2\,\eta^B \nu^{B,\star}_T &= - (2\,a^B-b)\bar Q^{B,\star}_T.
\end{cases}
\end{align}

Let us make an ansatz, and look for the solution to the above system in the form
\begin{align*}
    \bar{\nu}^{\star}_t &= g^a_t \alpha_t + g^b_t\,\bar{Q}^{\star}_t + g^c_t\,\bar{Q}^{B,\star}_t\,,\\
    \nu^{B,\star}_t &= h^a_t \alpha_t + h^b_t\,\bar{Q}^{\star}_t + h^c_t\,\bar{Q}^{B,\star}_t\,,
\end{align*}
where $g^a_t,g^b_t,g^c_t$ and $h^a_t,h^b_t,h^c_t$ are deterministic $\mathcal C^1$ functions, with terminal conditions $g^a_T = h^a_T = g^c_T = h^b_T = 0$, $g^b_T = -\bar a /\eta^I$ and  $h^c_T = -(2\,a^B-b) /2\,\eta^B$, and where 
\begin{equation*}
    \bar{Q}^{\star}_t =  \int_0^t \bar{\nu}^\star_u \,\d u\,, \qquad \text{ and }\qquad \bar{Q}^{B,\star}_t =  \int_0^t \left( \nu^{B,\star}_u - \bar{\nu}^\star_u \right)\,\d u \,.
\end{equation*}

It then follows that  
\begin{align*}
    \d \bar{\nu}^{\star}_t &=  \alpha_t \d g^a_t + g^a_t \d \alpha_t + \bar{Q}^{\star}_t \,\d g^b_t  + g^b_t \,\d \bar{Q}^{\star}_t + \bar{Q}^{B,\star}_t\, \d g^c_t + g^c_t\,\d \bar{Q}^{B,\star}_t\\
&=  \alpha_t \d g^a_t - k^\alpha \alpha_t g^a_t \d t  + \bar{Q}^{\star}_t \,\d g^b_t  + g^b_t \,\bar{\nu}^{\star}_t\,\d t + \bar{Q}^{B,\star}_t\, \d g^c_t + g^c_t\,\left(\nu^{B,\star}_t - \bar{\nu}^{\star}_t\right)\d t + \sigma^\alpha g^a_t \d W^\alpha_t\\
&=  \alpha_t \d g^a_t - k^\alpha \alpha_t g^a_t \d t + \bar{Q}^{\star}_t \,\d g^b_t  + g^b_t \,\left(g^a_t \alpha_t + g^b_t\,\bar{Q}^{\star}_t + g^c_t\,\bar{Q}^{B,\star}_t\right)\,\d t + \bar{Q}^{B,\star}_t\, \d g^c_t\\
&\quad + g^c_t\,\left( \left(h^a_t - g^a_t\right) \alpha_t + \left(h^b_t - g^b_t\right)\,\bar{Q}^{\star}_t + \left(h^c_t - g^c_t\right)\,\bar{Q}^{B,\star}_t\right)\d t + \sigma^\alpha g^a_t \d W^\alpha_t\\
& = \alpha_t \big\{ \d g^a_t - k^\alpha  g^a_t \d t + g^b_t\,g^a_t\d t + g^c_t\left(h^a_t  - g^a_t\right)\d t \big\}+ \bar{Q}^{\star}_t\,\big\{ \d g^b_t + \left(g^b_t\right)^2\d t + g^c_t\left(h^b_t  - g^b_t\right)\d t\big\}\\
&\quad + \bar{Q}^{B,\star}_t\,\big\{ \d g^c_t + g^b_t\,g^c_t\d t + g^c_t\left(h^c_t  - g^c_t\right)\d t \big\}\,+ \sigma^\alpha g^a_t \d W^\alpha_t,
\end{align*}
and similarly, 
\begin{align*}
\d \bar{\nu}^{B,\star}_t &=  \alpha_t\d h^a_t + h^a_t \d \alpha_t + \bar{Q}^{\star}_t \,\d h^b_t  + h^b_t \,\d \bar{Q}^{\star}_t + \bar{Q}^{B,\star}_t\, \d h^c_t + h^c_t\,\d \bar{Q}^{B,\star}_t\\
&=  \alpha_t \d h^a_t - k^\alpha \alpha_t h^a_t \d t  + \bar{Q}^{\star}_t \,\d h^b_t  + h^b_t \,\bar{\nu}^{\star}_t\,\d t + \bar{Q}^{B,\star}_t\, \d h^c_t + h^c_t\,\left(\nu^{B,\star}_t - \bar{\nu}^{\star}_t\right)\d t + \sigma^\alpha h^a_t \d W^\alpha_t\\
&=  \alpha_t \d h^a_t - k^\alpha \alpha_t h^a_t \d t + \bar{Q}^{\star}_t \,\d h^b_t  + h^b_t \,\left(g^a_t \alpha_t + g^b_t\,\bar{Q}^{\star}_t + g^c_t\,\bar{Q}^{B,\star}_t\right)\,\d t + \bar{Q}^{B,\star}_t\, \d h^c_t\\
&\quad + h^c_t\,\left(\left(h^a_t - g^a_t\right)\alpha_t + \left(h^b_t - g^b_t\right)\,\bar{Q}^{\star}_t + \left(h^c_t - g^c_t\right)\,\bar{Q}^{B,\star}_t\right)\d t + \sigma^\alpha h^a_t \d W^\alpha_t\\
& = \alpha_t \big\{ \d h^a_t - k^\alpha h^a_t \d t + h^b_t\,g^a_t\d t + h^c_t\left(h^a_t  - g^a_t\right) \d t\big\} + \bar{Q}^{\star}_t\,\big\{ \d h^b_t + h^b_t\,g^b_t\d t + h^c_t\left(h^b_t  - g^b_t\right)\d t\big\}\\
&\quad + \bar{Q}^{B,\star}_t\,\big\{ \d h^c_t + h^b_t\,g^c_t\d t + h^c_t\left(h^c_t  - g^c_t\right)\d t \big\}\, + \sigma^\alpha h^a_t \d W^\alpha_t.
\end{align*}
Given that $\bar{\nu}^{\star}_t$ also satisfies the above FBSDE, we have that 
\begin{align*}
      \d\bar{\nu}^{\star}_t &= -\frac{1}{2\,\eta^I}\left(b\,\nu^{B,\star}_t  + \alpha_t - 2\,\bar \phi \bar{Q}^{\star}_t \right) \d t + \frac{1}{2\,\eta^I}\, \d \bar{Z}^I_t,\\
      &= -\frac{1}{2\,\eta^I}\left(b\,\left(h^a_t \alpha_t + h^b_t\,\bar{Q}^{\star}_t + h^c_t\,\bar{Q}^{B,\star}_t\right)  + \alpha_t - 2\,\bar \phi \bar{Q}^{\star}_t \right) \d t + \frac{1}{2\,\eta^I}\, \d \bar{Z}^I_t,
\end{align*}
and similarly for $\bar{\nu}^{B,\star}_t$, for which we have that 
\begin{align*}
      \d \nu^{B,\star}_t &= -\frac{1}{2\,\eta^B}\left(b\,\bar \nu^{\star}_t + \alpha_t  - 2 \phi^B \bar Q^{B,\star}_t \right) \d t + \frac{1}{2\,\eta^B} \d Z^B_t\\
      &= -\frac{1}{2\,\eta^B}\left(b\,\left(g^a_t \alpha_t + g^b_t\,\bar{Q}^{\star}_t + g^c_t\,\bar{Q}^{B,\star}_t\right) + \alpha_t  - 2 \phi^B \bar Q^{B,\star}_t \right) \d t + \frac{1}{2\,\eta^B} \d Z^B_t\,.
\end{align*}
Combining the derived expressions we have that 
\begin{align*}
    0 & = \alpha_t\bigg\{ \d g^a_t - k^\alpha g^a_t \d t + g^b_t\,g^a_t \d t + g^c_t\left(h^a_t  - g^a_t\right) \d t  + \frac{b\,h^a_t + 1}{2\,\eta^I} \d t\bigg\}\\
    &\quad + \bar{Q}^{\star}_t\,\bigg\{ \d g^b_t + \left(g^b_t\right)^2 \d t + g^c_t\left(h^b_t  - g^b_t\right) \d t + \frac{b\,h^b_t - 2\,\bar \phi}{2\,\eta^I} \d t \bigg\}\\
    &\quad + \bar{Q}^{B,\star}_t\,\bigg\{ \d g^c_t + g^b_t\,g^c_t \d t + g^c_t\left(h^c_t  - g^c_t\right) \d t + \frac{b\,h^c_t}{2\,\eta^I} \d t \bigg\}\, + \sigma^\alpha g^a_t \d W^\alpha_t- \frac{1}{2\,\eta^I}\,\d \bar{Z}^I_t,
\end{align*}
and 
\begin{align*}
    0 & = \alpha_t \bigg\{ \d h^a_t - k^\alpha h^a_t \d t + h^b_t\,g^a_t \d t + h^c_t\left(h^a_t  - g^a_t\right) \d t + \frac{b\,g^a_t+1}{2\,\eta^B} \d t  \bigg\} \\
&\quad + \bar{Q}^{\star}_t\,\bigg\{ \d h^b_t + h^b_t\,g^b_t \d t + h^c_t\left(h^b_t  - g^b_t\right) \d t + \frac{b\,g^b_t}{2\,\eta^B} \d t \bigg\}\\
&\quad + \bar{Q}^{B,\star}_t\,\bigg\{ \d h^c_t + h^b_t\,g^c_t \d t  + h^c_t\left(h^c_t  - g^c_t\right) \d t + \frac{b\,g^c_t - 2\,\phi^B}{2\,\eta^B}  \d t \bigg\}\, +\sigma^\alpha h^a_t \d W^\alpha_t- \frac{1}{2\,\eta^B}\d Z^B_t.
\end{align*}
Then, by setting
$$\d Z^B_t = 2 \eta^B \sigma^\alpha \bar h^a(t) \d W^\alpha_t\qquad \text{and} \qquad \d \bar Z^I_t = 2 \eta^I \sigma^\alpha \bar g^a(t) \d W^\alpha_t,$$
we observe that the system of equations becomes 
\begin{align*}
    0&=\d g^a_t + \left[- k^\alpha g^a_t +g^b_t\,g^a_t + g^c_t\left(h^a_t  - g^a_t\right)  + \frac{b\,h^a_t + 1}{2\,\eta^I}\right]\d t \\
    0&=\d h^a_t + \left[- k^\alpha h^a_t + h^b_t\,g^a_t + h^c_t\left(h^a_t  - g^a_t\right) + \frac{b\,g^a_t+1}{2\,\eta^B}\right] \d t \\
    0&=\d g^b_t + \left[\left(g^b_t\right)^2 + g^c_t\left(h^b_t  - g^b_t\right) + \frac{b\,h^b_t - 2\,\bar \phi}{2\,\eta^I} \right]\d t\\
    0&= \d h^b_t + \left[h^b_t\,g^b_t + h^c_t\left(h^b_t  - g^b_t\right) + \frac{b\,g^b_t}{2\,\eta^B}\right]\d t\\
    0&=\d g^c_t + \left[g^b_t\,g^c_t + g^c_t\left(h^c_t  - g^c_t\right) + \frac{b\,h^c_t}{2\,\eta^I}\right]\d t\\
    0&=\d h^c_t + \left[h^b_t\,g^c_t + h^c_t\left(h^c_t  - g^c_t\right) + \frac{b\,g^c_t - 2\,\phi^B}{2\,\eta^B} \right]\d t\,,
\end{align*}
 with terminal condition $g^a_T = h^a_T = g^c_T = h^b_T = 0$, $g^b_T = -\bar a /\eta^I$ and  $h^c_T = -(2\,a^B-b) /2\,\eta^B$. We see that the system for $g^b_t,g^c_t,h^b_t,h^c_t$ is independent of the solution to $g^a_t,h^a_t$.
Let $\bm{P}:[0,T]\to\R^4$ be given by
\begin{align*}
    \bm{P}_t = -\begin{pmatrix}
           h^c_t & h^b_t \\
           g^c_t & g^b_t 
         \end{pmatrix}
\end{align*}
and let $\bm{U},\bm{Y},\bm{Q},\bm{S}\in\R^{2\times 2}$ be given by
\begin{align*}
    \bm{U} = 
    \begin{pmatrix}
       1 & -1 \\
       0 &  1 
    \end{pmatrix}\,,\qquad
    \bm{Y} = 
    \begin{pmatrix}
       0 & \frac{b}{2\,\eta^B} \\
       \frac{b}{2\,\eta^I} &  0 
    \end{pmatrix}\,,\qquad
    \bm{Q} = 
    \begin{pmatrix}
       - \frac{\phi^B}{\eta^B} & 0 \\
       0 &  - \frac{\bar \phi}{\eta^I}
    \end{pmatrix}\,,\qquad 
    \bm{S} = \begin{pmatrix}
       \frac{2\,a^B-b}{2\,\eta^B} & 0 \\
       0 & \frac{\bar a}{\eta^I} 
    \end{pmatrix}\,.
\end{align*}
The system of ODEs for $g^b_t,g^c_t,h^b_t,h^c_t$ can be written as the following matrix Riccati differential  equation (MRDE)
\begin{align}\label{eq: riccati eq}
\begin{cases}
    & 0 = \frac{\d \bm{P}_t}{\d t} + \bm{Y}\,\bm{P}_t -  \bm{P}_t\,\bm{U}\,\bm{P}_t - \bm{Q}\,,\qquad t\in[0,T]\,,\\
    &\bm{P}_T = 
    \bm{S}\,.
\end{cases}  
\end{align}
The above system has a solution as a consequence of Theorem 2.3 in \citeauthor{freiling2000non} \cite{freiling2000non} for $\bm{C}=0$ and $\bm{D}=\bm{I}_2$, where $0$ and $\bm{I}_2$ denote the zero and the identity matrix in $\R^{2\times 2}$. To be more precise, using the notation of \cite{freiling2000non},  we have that  $\bm{B}_{11} = 0$, $\bm{B}_{12} = -\bm{U}$, $\bm{B}_{21}= \bm{Q}$, and $\bm{B}_{22} = -\bm{Y}$. Then, it follows that for our choice of $\bm{C}$ and $\bm{D}$, 
\begin{equation*}
    \bm{C}+\bm{D}\,\bm{S} + \bm{S}^\intercal\,\bm{D}^\intercal = 2\,\bm{S} >0\,,
\end{equation*}
that is, it is positive definite given that we assumed that  $2\,a^B-b > 0$ and $a^I>0$. Next, the matrix $\bm{L}$ defined as 
\begin{equation*}
    \bm{L}=  
    \begin{pmatrix}
      \bm{D}\,\bm{B}_{21} & \bm{B}_{11}^\intercal \,\bm{D} + \bm{D}\,\bm{B}_{22}\\
      0 & \bm{B}_{12}^{\intercal}\,\bm{D}
    \end{pmatrix} = 
    \begin{pmatrix}
      \bm{Q} & -\bm{Y}\\
      0 & -\bm{U}
    \end{pmatrix} 
    \,,
\end{equation*}
satisfies that $\det(\bm{L}) = \det(\bm{Q})\times\det(-\bm{U})$. Given that $\bm{Q}\leq 0$ and $-\bm{U}<0$, it follows that all eigenvalues of $\bm{L}$ are guaranteed to be non-positive, and at least one of them is guaranteed to be nonzero. A short calculation shows that $\bm{x}\,\left(\bm{L}+\bm{L}^\intercal\right)\,\bm{x}^\intercal$ is 
\begin{equation*}
    -\frac{\phi^B\,x_1^2}{\eta^B} -\frac{\phi^I\,x_2^2}{\eta^I} - \frac{b\,x_2\,x_3}{2\,\eta^I}  - x_3^2 - \frac{b\,x_1\,x_4}{2\,\eta^B} + x_3\,x_4 - x_4^2
\end{equation*}
where $\bm{x} = (x_1,x_2,x_3,x_4)$. Using the inequality $\pm 2\,x\,y \leq x^2+y^2$ for $x,y\in\mathbb{R}$, we see that a sufficient condition for $\bm{x}\,\left(\bm{L}+\bm{L}^\intercal\right)\,\bm{x}^\intercal\leq 0$ for all $\bm{x}\in\mathbb{R}^{4}$ is that $b\leq 2\,\eta^B,\,2\,\eta^I,\,4\,\phi^B,\,4\,\bar{\phi}$, which we assumed.
It follows that $\bm{L}+\bm{L}^\intercal\leq 0$ which implies that we can make use of Theorem 2.3 in \citeauthor{freiling2000non} \cite{freiling2000non} and show that there is a solution to \eqref{eq: riccati eq}.
These last arguments follows closely part II of the proof of Theorem 3.5 in \citeauthor{casgrain2020mean} \cite{casgrain2020mean}, and similar to them, given that the solution exists and is continuous in $[0,T]$, it is bounded, and  we conclude that the unique solution takes the form 
\begin{align*}
    \bm{P}_t = \bm{T}_t \,\bm{R}^{-1}_t\,,
\end{align*}
where $\bm{R}_t,\bm{T}_t$ solve the linear system of differential equations
\begin{align*}
    \frac{\d }{\d t}
    \begin{pmatrix}
        \bm{R}_t\\
        \bm{T}_t
    \end{pmatrix}
    = 
    \begin{pmatrix}
        0 & \bm{U}\\
        -\bm{Q} & -\bm{Y}
    \end{pmatrix}
    \,
    \begin{pmatrix}
        \bm{R}_t\\
        \bm{T}_t
    \end{pmatrix}\,,\qquad \begin{pmatrix}
        \bm{R}_T\\
        \bm{T}_T
    \end{pmatrix} = \begin{pmatrix}
        I\\
        \bm{S}
    \end{pmatrix}\,,
\end{align*}
as a consequence of Theorem 3.1 in \citeauthor{freiling2002survey} \cite{freiling2002survey}.\\

Lastly, given the above solutions, we just have to solve the linear system of ODEs given by:
\begin{align}\label{haga}
\begin{cases}
     0&=\d g^a_t + \left[ - k^\alpha g^a_t + g^b_t\,g^a_t + g^c_t\left(h^a_t  - g^a_t\right)  + \frac{b\,h^a_t + 1}{2\,\eta^I}\right] \d t \vspace{0.1cm} \\
     0&=\d h^a_t +  \left[ - k^\alpha h^a_t + h^b_t\,g^a_t + h^c_t\left(h^a_t  - g^a_t\right) + \frac{b\,g^a_t+1}{2\,\eta^B} \right] \d t   \,,
\end{cases}
\end{align}
with terminal conditions $g^a_T = h^a_T =0$. Let 
\begin{align*}
\bm{X}_t = \begin{pmatrix}
				h^a_t\\
				g^a_t
			\end{pmatrix}\,,
\qquad \bm{A}_t = \begin{pmatrix}
				-\frac{1}{2\,\eta^B} \vspace{0.2cm} \\
				-\frac{1}{2\,\eta^I}
			\end{pmatrix}\,,
\qquad \bm{B}_t = \begin{pmatrix}
				k^\alpha -h^c_t & h^c_t - h^b_t - \frac{b}{2\,\eta^B} \\
				-g^c_t - \frac{b}{2\,\eta^I} & k^\alpha + g^c_t - g^b_t
			\end{pmatrix}\,,
\end{align*}
then, we have that the system for $h^a_t$ and $g^a_t$ can be written as
\begin{align*}
\d \bm{X}_t = \left(\bm{A}_t +  \bm{B}_t\,\bm{X}_t\right)\d t\,,
\end{align*}
with terminal condition $\bm{X}_T = 0$. The solution is therefore given by
\begin{align*}
    \bm{X}_t =  - \int_t^T e^{-\int_t^s \bm B_u \d u}\bm A_s \d s.
\end{align*}

The closed-form optimal solution to \eqref{MFGFBSDE_system2} is then 
\begin{align}\label{eq: closed-form MFG strat}
\begin{pmatrix}
	\nu^{B,\star}_t\\
	\bar{\nu}^{\star}_t 
\end{pmatrix} = \bm{X}_t\, \alpha_t - \bm{P}_t\,\begin{pmatrix}
	\bar{Q}^{B,\star}_t\\
	\bar{Q}^{\star}_t 
\end{pmatrix}\,.
\end{align}

\begin{remark}
The optimal trading strategy of the broker can be written as
\begin{align}
    \nu^{B,\star}_t &=  q^a_t\,\left(\bar{\nu}^{\star}_t - g^b_t\,\bar{Q}^{\star}_t - g^c_t\,\bar{Q}^{B,\star}_t\right) + h^b_t\,\bar{Q}^{\star}_t + h^c_t\,\bar{Q}^{B,\star}_t\\
    &=  q^a_t\,\bar{\nu}^{\star}_t   + \left(h^b_t- q^a_t\,g^b_t\,\right)\,\bar{Q}^{\star}_t + \left(h^c_t- q^a_t\,g^c_t\right)\,\bar{Q}^{B,\star}_t\,,
\end{align}
where the externalisation rate $q^a_t$ is defined as
\begin{equation}
    q^a_t = \frac{h^a_t}{g^a_t}
\end{equation}
for $t\in[0,T)$, and for $t=T$ as the limit of the above expression when $t\to T$.
\end{remark}

\subsubsection{The optimal strategy of the informed trader}

Finally, we can solve the FBSDE of the representative informed trader:
\begin{align}\label{FBSDE_informed}
    \begin{cases}
    -\d \left(2\,\eta^I \nu^{I,\star}_t \right) &= \left(b\,\nu^{B,\star}_t + \alpha^I_t + \alpha_t - 2\,\phi^I Q^{I,\star}_t \right) \d t - \d Z^I_t,\\
        2\,\eta^I \nu^{I,\star}_T &= -2\,a^I Q^{I, \star}_T.
\end{cases}
\end{align}
As before, we make an ansatz and look for a solution with the form
\begin{align}\label{eq: ind informed opt strat}
    \nu^{I,\star}_t = f^a_t \alpha_t + f^{a,I}_t \alpha^I_t + f^b_t \bar Q^\star_t + f^{b,I}_t Q^{I,\star}_t + f^c_t \bar Q^{B,\star}_t,
\end{align}
where $f^a, f^{a,I}, f^b, f^{b,I}, f^c$ are deterministic $\mathcal C^1$ functions, with terminal conditions $f^a_T= f^{a,I}_T= f^b_T=  f^c_T= 0$ and $f^{b,I}_T = -a^I/\eta^I$, and where
$$Q^{I,\star}_t = \int_0^t \nu^{I,\star}_u \d u.$$

It then follows that
\begin{align*}
    \d \nu^{I,\star}_t &= \alpha_t \d f^a_t + f^a_t \d \alpha_t + \alpha^I_t \d f^{a,I}_t + f^{a,I}_t \d \alpha^I_t + \bar Q^\star_t \d f^b_t + f^b_t \d \bar Q^\star_t + Q^{I,\star}_t \d f^{b,I}_t + f^{b,I}_t \d Q^{I,\star}_t + \bar Q^{B,\star}_t \d f^c_t + f^c_t \d \bar Q^{B,\star}_t\\
    &= \alpha_t \d f^a_t - k^{\alpha} \alpha_t f^a_t \d t + \alpha^I_t \d f^{a,I}_t - k^I \alpha^I_t f^{a,I}_t \d t + \bar Q^\star_t \d f^b_t + f^b_t \bar \nu^\star_t \d t + Q^{I,\star}_t \d f^{b,I}_t + f^{b,I}_t \nu^{I,\star}_t \d t + \bar Q^{B,\star}_t \d f^c_t\\
    & \quad + f^c_t \left( \nu^{B,\star}_t -  \nu^\star_t \right) \d t + \sigma^\alpha f^a_t \d W^\alpha_t + \sigma^I f^{a,I}_t \d W^I_t\\
    &= \alpha_t \d f^a_t - k^{\alpha} \alpha_t f^a_t \d t + \alpha^I_t \d f^{a,I}_t - k^I \alpha^I_t f^{a,I}_t \d t + \bar Q^\star_t \d f^b_t + f^b_t \left( g^a_t \alpha_t + g^b_t \bar Q^\star_t + g^c_t \bar Q^{B,\star}_t \right) \d t\\
    &\quad + Q^{I,\star}_t \d f^{b,I}_t + f^{b,I}_t \left( f^a_t \alpha_t + f^{a,I}_t \alpha^I_t + f^b_t \bar Q^\star_t + f^{b,I}_t Q^{I,\star}_t + f^c_t \bar Q^{B,\star}_t \right) \d t + \bar Q^{B,\star}_t \d f^c_t\\
    &\quad + f^c_t \left( (h^a_t - g^a_t) \alpha_t + (h^b_t-g^b_t) \bar Q^\star_t + (h^c_t - g^c_t) \bar Q^{B,\star}_t \right) \d t + \sigma^\alpha f^a_t \d W^\alpha_t + \sigma^I f^{a,I}_t \d W^I_t\\
    &= \alpha_t \left\{\d f^a_t - k^\alpha f^a_t \d t +f^b_t g^a_t \d t + f^{b,I}_t f^a_t \d t + f^c_t (h^a_t - g^a_t) \d t \right\}\\
    &\quad + \alpha^I_t \left\{ \d f^{a,I}_t -k^I f^{a,I}_t \d t + f^{b,I}_t f^{a,I}_t \d t \right\}\\
    &\quad + \bar Q^\star_t \left\{ \d f^b_t + f^b_t g^b_t \d t + f^{b,I}_t f^b_t \d t + f^c_t (h^b_t - g^b_t) \d t\right\}\\
    &\quad + Q^{I,\star}_t \left\{ \d f^{b,I}_t + \left(f^{b,I}_t\right)^2 \right\} + \bar Q^{B,\star}_t \left\{ \d f^c_t + f^b_t g^c_t \d t + f^{b,I}_t f^c_t \d t + f^c_t (h^c_t - g^c_t) \ d t\right\}\\
    &\quad + \sigma^\alpha f^a_t \d W^\alpha_t + \sigma^I f^{a,I}_t \d W^I_t.
\end{align*}

Given that $\nu^{I,\star}_t$ also satisfies the above FBSDE, we have that
\begin{align*}
    \d \nu^{I,\star}_t &= -\frac{1}{2\,\eta^I}\left(b\,\nu^{B,\star}_t  + \alpha^I_t + \alpha_t - 2\,\phi^I 
    {Q}^{I, \star}_t \right) \d t + \frac{1}{2\,\eta^I}\, \d {Z}^I_t,\\
      &= -\frac{1}{2\,\eta^I}\left(b\,\left(h^a_t \alpha_t + h^b_t\,\bar{Q}^{\star}_t + h^c_t\,\bar{Q}^{B,\star}_t\right) + \alpha^I_t  + \alpha_t - 2\,\phi^I {Q}^{I,\star}_t \right) \d t + \frac{1}{2\,\eta^I}\, \d {Z}^I_t.
\end{align*}

Combining the derived expressions we have that
\begin{align*}
    0 &= \alpha_t \left\{ \d f^a_t - k^\alpha f^a_t \d t +f^b_t g^a_t \d t + f^{b,I}_t f^a_t \d t + f^c_t (h^a_t-g^a_t) \d t + \frac{b h^a_t + 1}{2\eta^I} \d t \right\}\\
    &\quad + \alpha^I_t \left\{ \d f^{a,I}_t -k^I f^{a,I}_t \d t + f^{b,I}_t f^{a,I}_t \d t +\frac 1{2\eta^I} \d t \right\}\\
    &\quad + \bar Q^\star_t \left\{ \d f^b_t + f^b_t g^b_t \d t +f^{b,I}_t f^b_t \d t + f^c_t (h^b_t - g^b_t) \d t + \frac{b h^b_t}{2 \eta^I} \d t \right\}\\
    &\quad + Q^{I,\star}_t \left\{ \d f^{b,I}_t + \left( f^{b,I}_t \right)^2 \d t - \frac{\phi^I}{\eta^I} \d t \right\}\\
    &\quad + \bar Q^{B,\star}_t \left\{ \d f^c_t +f^b_t g^c_t \d t +f^{b,I}_t f^c_t \d t + f^c_t \left( h^c_t - g^c_t \right) \d t + \frac{bh^c_t}{2 \eta^I} \d t \right\}\\
    &\quad + \left[ \sigma^\alpha f^a_t \d W^\alpha_t + \sigma^I f^{a,I}_t \d W^I_t - \frac 1{2\eta^I} \d Z^I_t \right].
\end{align*}

Then, by setting
$$\d Z^I_t  = 2\eta^I \left[ \sigma^\alpha f^a_t \d W^\alpha_t + \sigma^I f^{a,I}_t \d W^I_t \right],$$
we observe that the system of equations becomes
\begin{align}
    0&=\d f^a_t + \left[- k^\alpha f^a_t +f^b_t g^a_t + f^{b,I}_t f^a_t + f^c_t (h^a_t-g^a_t) + \frac{b h^a_t + 1}{2\eta^I} \right]\d t \label{fa}\\
    0&=\d f^{a,I}_t + \left[ -k^I f^{a,I}_t + f^{b,I}_t f^{a,I}_t +\frac 1{2\eta^I} \right] \d t \label{faI}\\
    0&=\d f^b_t + \left[f^b_t g^b_t +f^{b,I}_t f^b_t + f^c_t (h^b_t - g^b_t) + \frac{b h^b_t}{2 \eta^I} \right]\d t\label{fb}\\
    0&= \d f^{b,I}_t + \left[\left( f^{b,I}_t \right)^2 - \frac{\phi^I}{\eta^I}\right]\d t\label{fbI}\\
    0&=\d f^c_t + \left[f^b_t g^c_t +f^{b,I}_t f^c_t + f^c_t \left( h^c_t - g^c_t \right) + \frac{bh^c_t}{2 \eta^I}\right]\d t,\label{fc}
\end{align}
with terminal conditions $f^a_T= f^{a,I}_T= f^b_T=  f^c_T= 0$ and $f^{b,I}_T = -a^I/\eta^I$.\\

Notice that Equation \eqref{fbI} is independent from the rest of the system. A particular solution to this Riccati equation is given by
$$y_p(t) = -\sqrt{\frac{\phi^I}{\eta^I}} \tanh \left( \sqrt{\frac{\phi^I}{\eta^I}} (T-t) \right) \qquad \forall t \in [0,T].$$

We then know that the general solution is given by
$y_p + u$ where $u$ solves
$$u' = -u^2 - 2y_p u$$
on $[0,T]$. Substituting $z = 1/u$ yields
$$z' = 2y_pz + 1,$$
with the terminal condition $f^{b,I}_T = - a^I / \eta^I $ now translating as $z(T) = -\eta^I / a^I$. The solution to this linear ODE is given by 
$$z(t) = -\frac{\eta^I}{a^I} \exp \left( -2 \int_t^T y_p(s) \d s \right) - \int_t^T \exp \left( -2 \int_t^u y_p(s) \d s \right) \d u .$$

We can finally conclude that the unique solution to Equation \eqref{fbI} with terminal condition $f^{b,I}_T = - a^I / \eta^I $
is given by
\begin{align*}
    f^{b,I}_t = -\sqrt{\frac{\phi^I}{\eta^I}} \tanh \left( \sqrt{\frac{\phi^I}{\eta^I}} (T-t) \right) - \frac{e^{2\int_t^T y_p(s) \d s}}{\eta^I / a^I + \int_t^T e^{2\int_u^T y_p(s) \d s} \d u }
\end{align*}
for $t\in [0,T]$.\\

Once we know $f^{b,I}$, Equation \eqref{faI} is a simple linear ODE with terminal condition $f^{a,I}_T = 0$. Its solution for $t\in [0,T]$ is therefore given by
\begin{align*}
    f^{a,I}_t = \frac 1{2\eta^I} \int_t^T e^{-\int_t^u \left( k^I - f^{b,I}_s \right) \d s} \d u\,.
\end{align*}

Let $\bm{A}^{b,c} :  [0,T]\rightarrow \mathbb R^4$ and $\bm{b}^{b,c} :  [0,T]\rightarrow \mathbb R^2$ be given by
$$\bm{A}^{b,c}_t = - \begin{pmatrix}  g^b_t + f^{b,I}_t  &  h^b_t - g^b_t  \\
g^c_t &      h^c_t - g^c_t + f^{b,I}_t
\end{pmatrix} \qquad \text{and} \qquad \bm{b}^{b,c}_t = - \frac{b}{2\eta^I} \begin{pmatrix}
    h^b_t \\ h^c_t
\end{pmatrix}.$$
We introduce the function $\bm{F}^{b,c} :  [0,T]\rightarrow \mathbb R^2$ given by
$$\bm{F}^{b,c}_t = \begin{pmatrix}
    f^b_t \\ f^c_t
\end{pmatrix}.$$

Then it is clear from Equations \eqref{fb} and \eqref{fc} that $\bm{F}^{b,c}$ satisfies
\begin{align*}
    \frac{\d}{\d t}{\bm{F}^{b,c}_t} = \bm{A}^{b,c}_t \bm{F}^{b,c}_t + \bm{b}^{b,c}_t
\end{align*}
with terminal condition $\bm{F}^{b,c}_T = 0$. The solution is 
\begin{align*}
    \bm{F}^{b,c}_t = - \int_t^Te^{-\int_t^u \bm{A}^{b,c}_s \d s } \bm{b}^{b,c}_u \d u
\end{align*}
for $t\in [0,T]$.\\

Finally, if we define $b^a:[0,T]\rightarrow \mathbb R$ by
$$b^a_t = -f^b_tg^a_t - f^c_t (h^a_t - g^a_t) - \frac{bh^a_t + 1}{2\eta^I} \qquad \forall t \in [0,T],$$

then the unique solution to the linear Equation \eqref{fa} with terminal condition $f^a_T = 0$ is given by
\begin{align*}
    f^a_t = -\int_t^T b^a_u e^{-\int_t^u (k^\alpha - f^{b,I}_s \d s)} \d u
\end{align*}
for  $t\in [0,T]$.

\section{Numerical results}\label{sec_num}
In this section we study the optimal trading strategies derived above. We discretise the trading window $[0,T]$, with $T=1$,   in  10,000 steps and perform one million simulations. Model parameters for the price dynamics are $\alpha_0 = 0$, $S_0 = 100$, $k^{\alpha} = 5$, $\sigma^{\alpha} = 1$, $\sigma^{s} = 1$. The price impact and penalty parameters are $\eta^I  = 1.0\times 10^{-3}$, $\eta^B = 1.2\times 10^{-3}$, $b=10^{-3}$, $a^I = 1$, $a^B = 1$, and $\phi^B=\phi^I= 10^{-2}$. 
Figure \ref{fig: paths} shows two sample paths of the main processes involved in the MFG Nash equilibrium we obtained in \eqref{eq: closed-form MFG strat}. 

\begin{figure}[H]
\begin{center}
\includegraphics[width=0.7\textwidth]{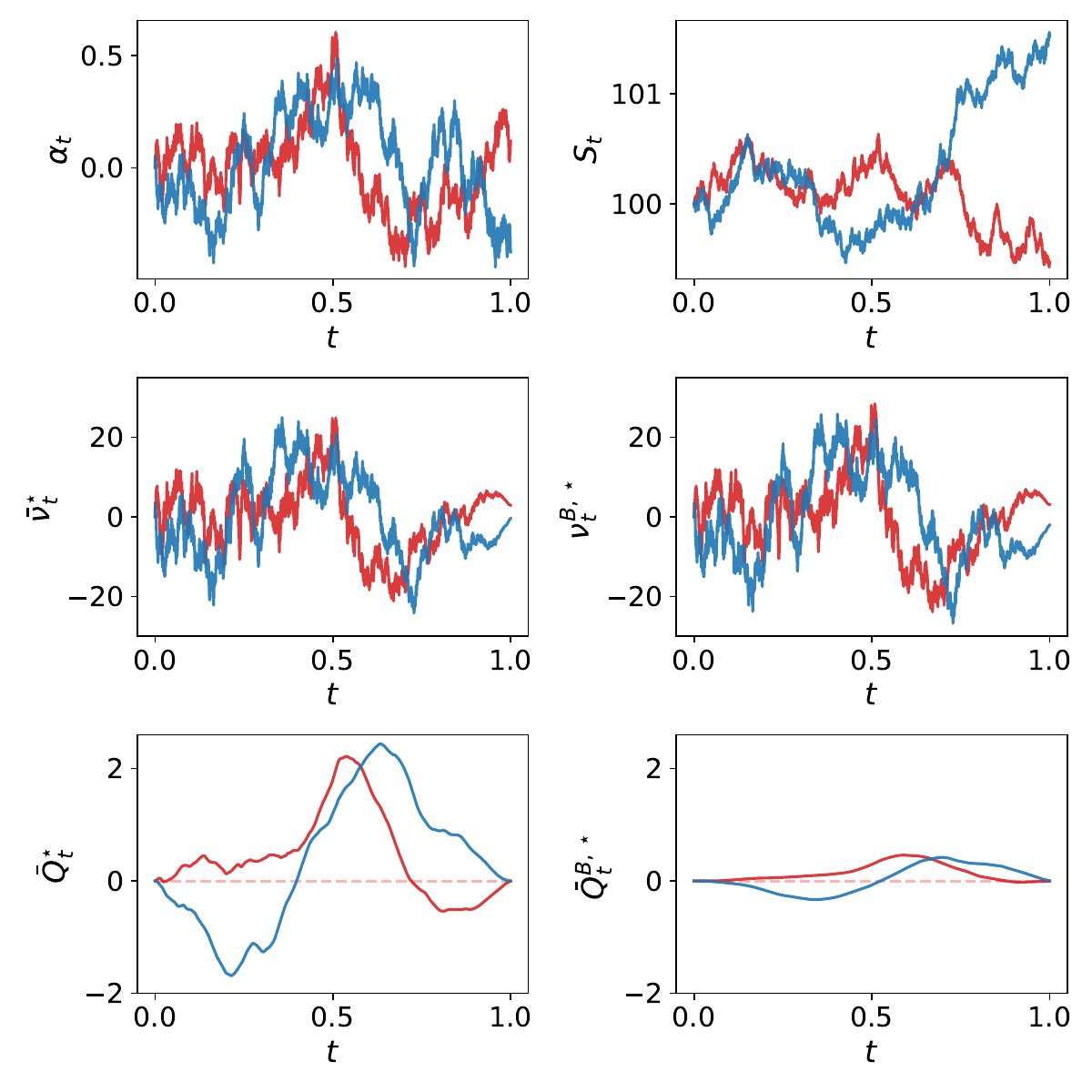}
\caption{Sample paths for $S_t$ $\alpha_t$, $\nu^I_t$, $\nu^B_t$, $Q^I_t$, and $Q^B_t$.}
\label{fig: paths}
\end{center}
\end{figure}

We see that both the mean field trading speed of the informed traders and that of the broker look almost identical to the naked eye for each of the two simulations shown. The cumulative difference, which of course is not zero, is shown in the inventory of the broker in the bottom left panel. \\

Next, we study each of the functions $g^{a,b,c}, h^{a,b,c}:[0,T]\to\mathbb{R}$ which define the optimal trading speeds in terms of the state variables of the control problems. Figure \ref{fig: functions g and h} shows each of the functions for the end of the trading window, in particular, we show the range $[0.95,1]$.

\begin{figure}[H]
\begin{center}
\includegraphics[width=0.7\textwidth]{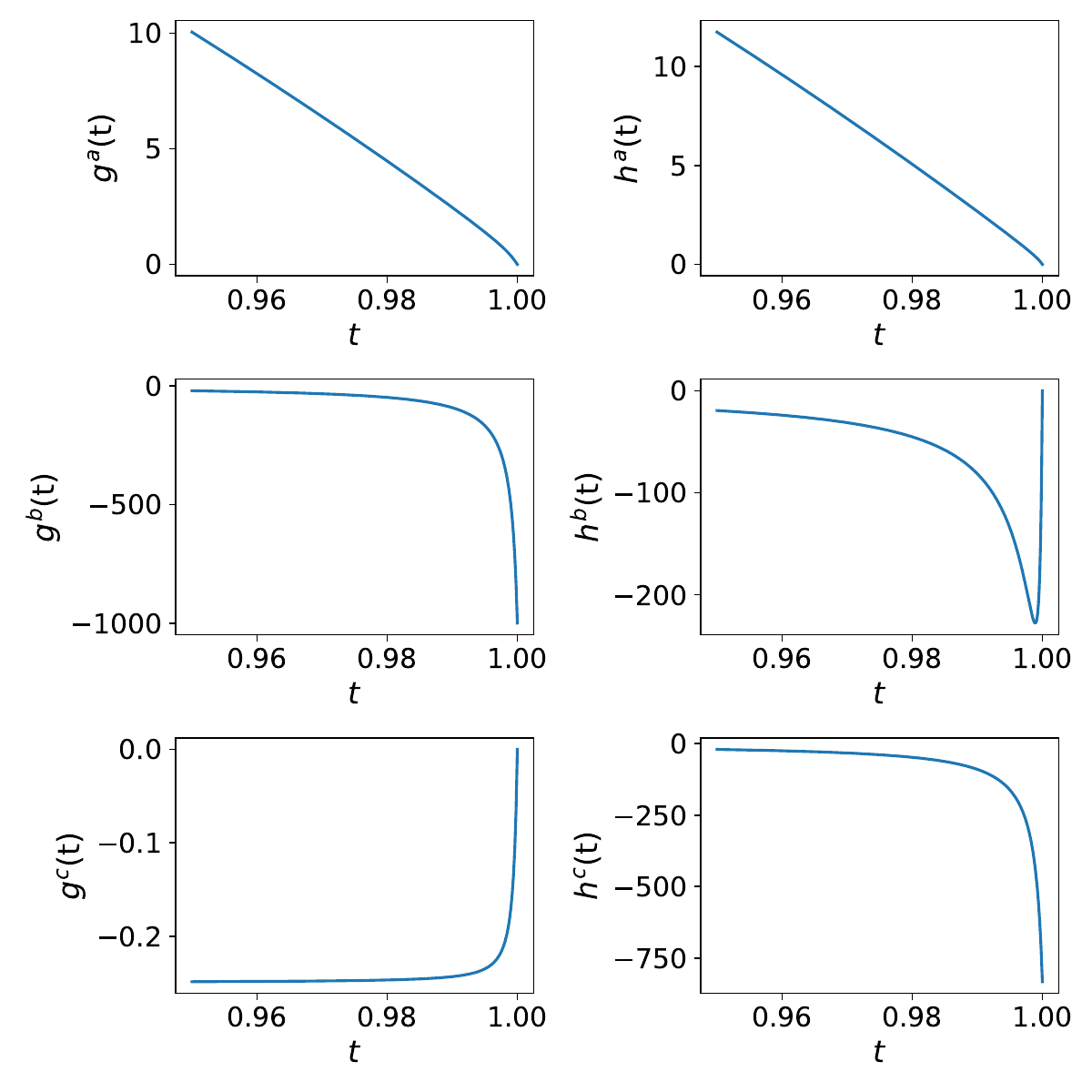}
\caption{Functions $g^{a,b,c}, h^{a,b,c}:[0,T]\to\mathbb{R}$ as time approaches $T$.}
\label{fig: functions g and h}
\end{center}
\end{figure}

We see that all four $g^{b,c},h^{b,c}$ are negative which follows from the intuition that the players wish to keep their inventory close to zero. Both $g^a$ and $h^a$ are positive and decrease towards zero which prescribes the way in which the signal is used, that is, both the informed and the broker trade in the direction of the common signal and as time progresses, this component of the trading strategy vanishes.
Both $g^b$ and $h^c$ have a similar behaviour; this is because these functions are the ones that force the terminal inventory (of the informed trader and the broker) towards the optimal level which gets closer to zero the larger the terminal penalty. Recall that $g^b(t)\,\bar{Q}^\star_t$ is part of the optimal trading speed of the mean-field informed trader and $h^c(t)\,\bar{Q}^{B,\star}_t$ is part of the optimal trading speed of the broker.\\ 

A more interesting behaviour is that of $h^b$. As expected, $h^b$ is negative. We observe that it decreases fast just before time $T.$ This is because of the terminal penalty of the informed traders; assume for instance that, as $t$ gets close to $T$, $\bar Q^\star_t$ is positive; in that case, the broker knows that, on average, the traders will start selling fast to him, because they want to have a flat inventory at $T$. Thus, in anticipation of this, the broker starts selling fast too on the D2D market. Lastly, the terminal condition takes $h^b$ back to zero. \\

Finally, Figure \ref{fig: functions fI} shows the components $f^{a,b,c}, f^{a,I}, f^{b,I}:[0,T]\to\mathbb{R}$ of the trading strategy of the individual informed trader; see \eqref{eq: ind informed opt strat}. We employ the same model parameters as before, together with $k^I = k^\alpha$, and $\sigma^I = 0.5\,\sigma^\alpha$. That is, the private signal has the same mean-reverting rate but lower variance when compared to the common signal.

\begin{figure}[H]
\begin{center}
\includegraphics[width=0.95\textwidth]{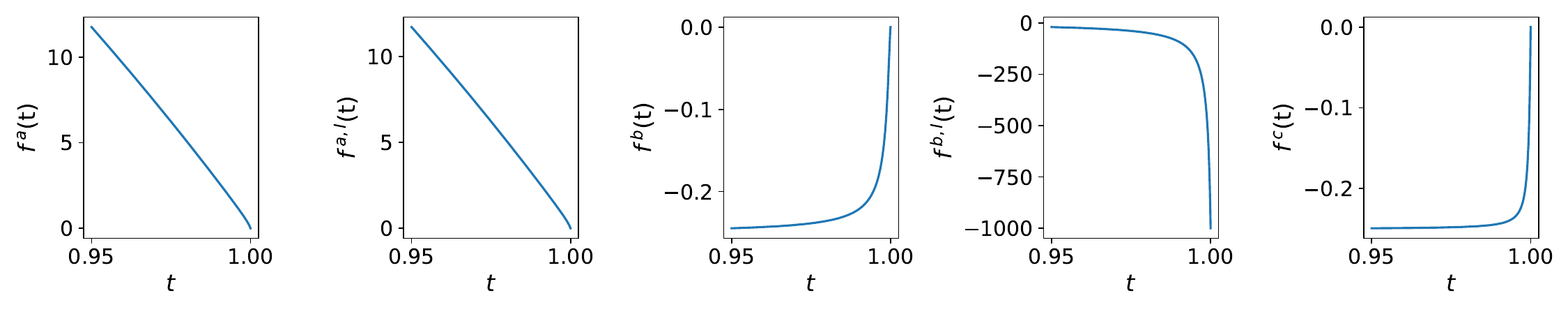}
\caption{Functions $f^{a,b,c}, f^{a,I}, f^{b,I}:[0,T]\to\mathbb{R}$ as time approaches $T$.}
\label{fig: functions fI}
\end{center}
\end{figure}

The interesting comparison is between (i) $f^a$ and $f^{a,I}$, and (ii) $f^b$ and $f^{b,I}$. We see that for (i) the behaviour is roughly the same. That is, the individual informed trader follows both signals in a similar way. On the other hand, the comparison for (ii) is not as straightforward. Indeed, as time progresses $f^{b,I}$ becomes more and more important in the trading strategy of the individual informed trader because of the constraint to liquidate inventory, whereas the value of $f^b$ vanishes because the informed trader stops pre-empting what the broker offloads of their order flow. 

\begin{figure}[H]
\begin{center}
\includegraphics[width=0.95\textwidth]{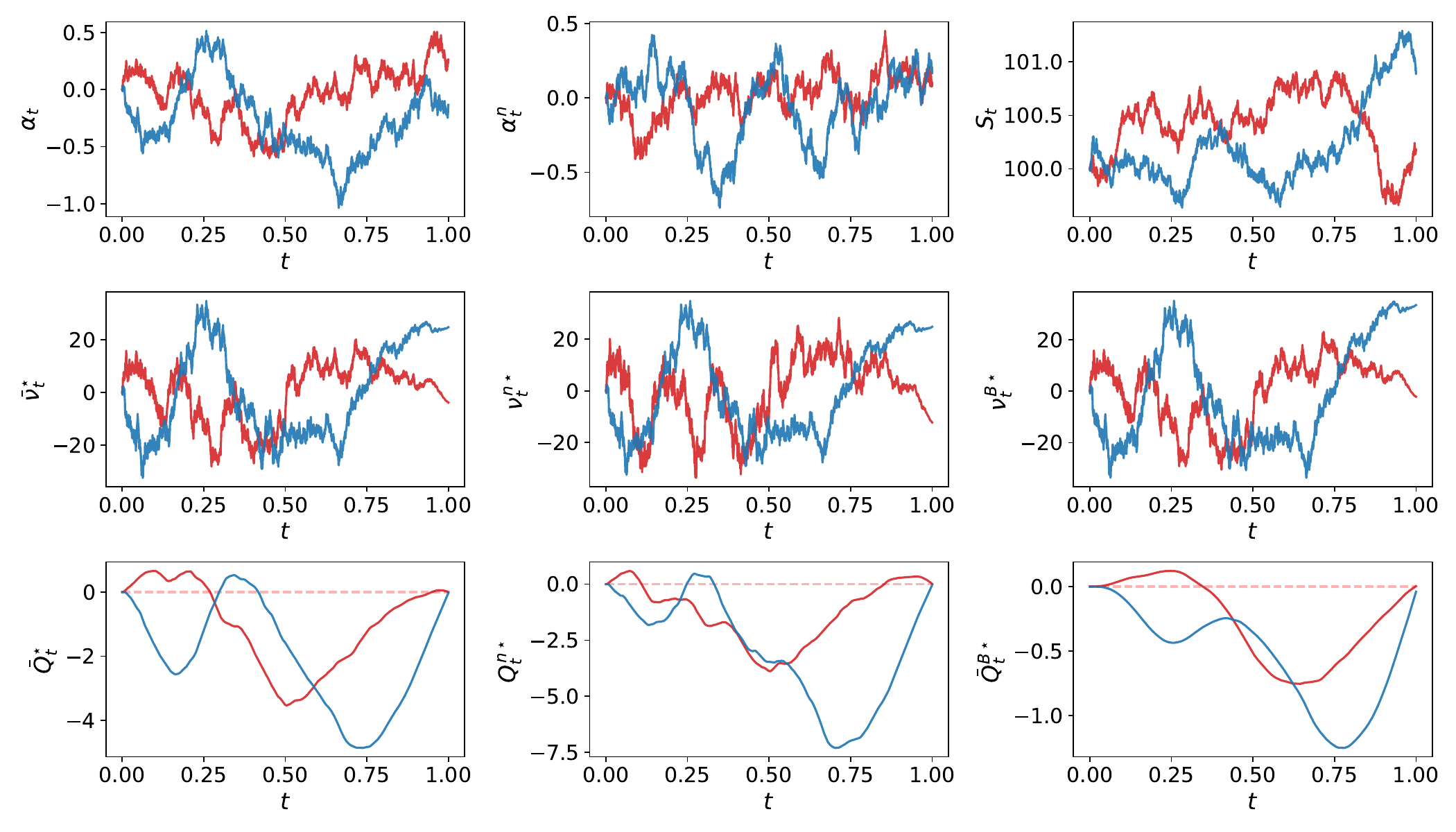}
\caption{Sample paths for the common signal $\alpha_t$ and the private signal $\alpha^n_t$, together with $S_t$, $\bar{\nu}^*_t$, $\nu^{n*}_t$, $\nu^{B,*}_t$, $\bar{Q}^*_t$, $Q^{n*}_t$, and $Q^{B,*}_t$.}
\label{fig: ith trader sample path}
\end{center}
\end{figure}

Figure \ref{fig: ith trader sample path} shows the effect of the individual signal $\alpha^n$ for the $n$-th informed trader. From the left middle panel and centre panel we observe that, both $\bar{\nu}^*_t$ and $\nu^{n*}_t$ are similar, with the latter showing a rougher behaviour due to the actions of the $n$-th informed trader on the  individual signal. The trajectory in red in the bottom two left panels shows the difference in more detail.

\section{Conclusion}\label{sec_concl}

In this paper, we study the problem of a broker facing many informed traders. Each informed trader observes both a common and an idiosyncratic signal. The broker charges a fixed transaction cost and chooses his externalisation rate based on the common signal and the average behaviour of the traders. Using a Gâteaux derivative approach, we derive a system of coupled forward-backward SDEs driving this optimisation problem. Using a sequence of  ansatzes, we solve this FBSDE system in closed form, and obtain the equilibrium strategy of the broker and that of the representative informed trader.\\

We then illustrate the results of our model using a set of realistic market parameters. As expected, the average trader's inventory moves with the common signal, and the broker adjusts his externalisation rate accordingly. More interestingly, the individual signal of a trader seems to have little impact on his trading strategy. This is due to the market impact of the broker as he externalises: his externalisation rate is driven by the average trading rate of the traders, which is itself driven by the common signal. Therefore, even in the presence of a private information that contradicts the beliefs of the market, the representative trader still tends to follow the herd, at least for a large enough value of the permanent price impact parameter.

\bibliographystyle{plainnat}
\bibliography{mfgbroker.bib}

\end{document}